\newtheorem{theorem}{Theorem}
\begin{document}
	
	\title{Movable-Antenna Trajectory Optimization for Wireless Sensing: CRB Scaling Laws over Time and Space}
	
	\author{{Wenyan Ma, \IEEEmembership{Graduate Student Member, IEEE}, Lipeng Zhu, \IEEEmembership{Member, IEEE}, and  Rui Zhang, \IEEEmembership{Fellow, IEEE}}
		\thanks{W. Ma, L. Zhu, and R. Zhang are with the Department of Electrical and Computer Engineering, National University of Singapore, Singapore 117583 (e-mail: wenyan@u.nus.edu, zhulp@nus.edu.sg, elezhang@nus.edu.sg).
	}}
	\maketitle
	
	\begin{abstract}
		In this paper, we present a new  wireless sensing system utilizing a movable antenna (MA) that continuously moves and receives sensing signals to enhance sensing performance over the conventional fixed-position antenna (FPA) sensing. We show that the angle estimation performance is fundamentally determined by the MA trajectory, and derive the Cramér-Rao bound (CRB) of the mean square error (MSE) for angle-of-arrival (AoA) estimation as a function of the trajectory for both one-dimensional (1D) and two-dimensional (2D) antenna movement. For the 1D case, a globally optimal trajectory that minimizes the CRB is derived in closed form. Notably, the resulting CRB decreases cubically with sensing time in the time-constrained regime, whereas it decreases linearly with sensing time and quadratically with the movement line segment's length in the space-constrained regime. For the 2D case, we aim to achieve the minimum of maximum (min-max) CRBs of estimation MSE for the two AoAs with respect to the horizontal and vertical axes. To this end, we design an efficient alternating optimization algorithm that iteratively updates the MA’s horizontal or vertical coordinates with the other being fixed, yielding a locally optimal trajectory. Numerical results show that the proposed 1D/2D MA-based sensing schemes significantly reduce both the CRB and actual AoA estimation MSE compared to conventional FPA-based sensing with uniform linear/planar arrays (ULAs/UPAs) as well as various benchmark MA trajectories. Moreover, it is revealed that the steering vectors of our designed 1D/2D MA trajectories have low correlation in the angular domain, thereby effectively increasing the angular resolution for achieving higher AoA estimation accuracy.	
	\end{abstract}
	
	\begin{IEEEkeywords}
		Wireless sensing, movable antenna (MA), Cramér-Rao bound (CRB), angle estimation, antenna trajectory optimization.
	\end{IEEEkeywords}
	
	\section{Introduction}
	The upcoming sixth-generation (6G) mobile communication networks are anticipated to enable a wide range of location-aware services, including autonomous driving, robotic navigation, and drone scheduling \cite{jiang2021the}. These applications require advanced sensing capabilities from wireless infrastructures, which is beyond conventional quality of service (QoS) requirements for communication rate and reliability. Consequently, there is a growing interest in integrated sensing and communication (ISAC), a new paradigm in which sensing and communication functions are jointly realized using shared hardware and/or spectrum resources. It is anticipated that wireless sensing, encompassing tasks such as detection, parameter estimation, and information extraction from surrounding targets, will become a core service in future 6G networks.
	
	Achieving high angular resolution and strong beamforming gain in sensing requires large-scale antenna arrays at nodes such as radars or base stations (BSs) \cite{mailloux2005phased}. However, the associated hardware cost and power consumption increase with the number of antennas, posing challenges for implementing low-cost yet high-performance sensing systems. To mitigate these costs, sparse antenna arrays have been proposed, where fewer antennas are deployed with increased inter-element spacing to retain much of the angular resolution of a large-scale antenna array \cite{roberts2011sparse}. Nevertheless, these sparse arrays typically employ fixed-position antennas (FPAs), which lack adaptability to varying sensing requirements and cannot switch between array geometries optimized for sensing and communication tasks. Moreover, FPAs in both large-scale and sparse arrays are unable to fully exploit the spatial variations of wireless signals within the region where the sensing transmitter or receiver is located.
	
	To address the limitations of FPA-based wireless sensing, this paper proposes a new sensing system employing the movable antenna (MA) \cite{zhu2023MAMag,zhu2025tutorial}, where the antenna's position at the transmitter or receiver can be dynamically adjusted. Compared with conventional FPA arrays, this additional degree of freedom (DoF) allows for significant performance enhancement with the same or even fewer antennas. First, expanding the antenna movement region effectively increases the array aperture, improving angular resolution. Second, optimizing the MA trajectory reduces correlation between steering vectors over different directions, mitigating ambiguity and interference in angle estimation. Third, real-time position adjustments allow adaptation to changing environmental conditions and varying sensing requirements. In practice, the MA trajectory can be either pre-configured for specific sensing tasks or dynamically optimized to meet evolving sensing demands.
	
	Research on MAs in wireless communications can be traced back to 2009 \cite{zhao2009single}, where significant diversity gain was achieved by relocating the MA within a given geographical area. Recently, MAs have attracted significant research attention due to their emerging applications and demonstrated advantages in wireless communications. For example, studies in  \cite{zhu2022MAmodel,mei2024movable,ning2024movable,tang2024secure} have shown that deploying MAs within a region spanning several wavelengths can substantially improve the received signal-to-noise ratio (SNR) under both deterministic and stochastic channel models. The MA-aided multiuser communication systems have been extensively studied \cite{zhu2023MAmultiuser,wu2023movable,qin2024antenna,cheng2023sum,yang2024flexible,hu2024power}, where optimizing the positions of MAs jointly with beamforming can efficiently mitigate multiuser interference. The spatial multiplexing in MA-aided MIMO systems has been analyzed in \cite{ma2022MAmimo,chen2023joint,yeyuqi2023fluid}, while channel estimation for MA systems, which enables reconstruction of the channel response for arbitrary transmit and receive antenna locations, has been discussed in \cite{ma2023MAestimation,xiao2023channel}. Additionally, the effectiveness of MA arrays in interference nulling and multi-beam forming has been demonstrated in \cite{zhu2023MAarray,ma2024multi}, and their potential in satellite communications and near-filed communication has been investigated in \cite{ZhuLP_satellite_MA,zhu2024nearfield}. Beyond conventional MA systems, six-dimensional MA (6DMA) architectures were introduced in \cite{shao2024discrete,shao2024Mag6DMA,shao2024exploiting}, further incorporating three-dimensional (3D) antenna rotations to fully exploit the spatial DoFs offered by antenna movement.
	
	Recently, the advantages of MAs have been extended to wireless sensing. Preliminary studies have shown that adjusting the positions of MAs can enhance radar imaging quality and assist in target localization \cite{zhuravlev2015experi,hinske2008using}. However, these works primarily demonstrated performance improvements from antenna repositioning without revealing the fundamental relationship between MAs' positions and sensing performance. More recent research has focused on optimizing MAs' positions to minimize the Cramér-Rao bound (CRB) for angle-of-arrival (AoA) estimation in both far-field and near-field scenarios \cite{ma2024MAsensing,chen2025MAISACopt,wang2025MAnearsensing}, where closed-form expressions of the CRB with respect to (w.r.t.) MAs' positions are derived and then minimized. This concept has been further advanced with 6DMAs \cite{shao2024exploiting}, which incorporate 3D rotations in addition to 3D positions, fully exploiting spatial DoFs to minimize the sum CRB of targets at typical locations. Nevertheless, these studies focus on improving sensing performance by optimizing MA array geometry, which is then fixed for a given sensing task. As a result, the joint time-spatial DoFs offered by antenna movement are not fully utilized, especially when only a limited number of MAs are available.
	
	In this paper, we study the MA-aided sensing system that exploits the additional DoFs provided by optimizing the MA trajectory. Unlike prior works that focus on optimizing the MA array geometries \cite{ma2024MAsensing,chen2025MAISACopt,wang2025MAnearsensing,shao2024exploiting}, we consider the scenario where the MA continuously moves while receiving sensing signals, and we design its trajectory to synthesize a large continuous virtual array for improving sensing performance. The main contributions of this paper are summarized as follows:
	
	\begin{itemize}
		\item First, for the case of one-dimensional (1D) antenna movement, we show that the CRB of the AoA estimation mean square error (MSE) decreases inversely with the variance of all MA's positions over time. Motivated by this, we formulate an optimization problem to minimize the CRB by optimizing the MA trajectory. Although the formulated problem is non-convex, we derive a globally optimal trajectory solution in closed form. Notably, the resulting CRB decreases cubically with sensing time in the time-constrained (TC) regime, whereas it decreases linearly with sensing time and quadratically with the movement line segment's length in the space-constrained (SC) regime.
		\item Next, for the case of two-dimensional (2D) antenna movement, we derive the CRB of AoA estimation MSE and formulate a trajectory optimization problem to minimize the CRB. Specifically, the objective is to achieve the minimum of maximum (min-max) CRBs of estimation MSE for the two AoAs w.r.t. the horizontal and vertical axes, respectively. To this end, we design an efficient alternating optimization algorithm that iteratively updates the MA’s horizontal or vertical coordinates with the other being fixed, yielding a locally optimal trajectory.
		\item Finally, we provide extensive simulation results showing that the proposed MA-aided sensing scheme achieves lower AoA estimation MSE compared with conventional FPA-based sensing with uniform linear/planar arrays (ULAs/UPAs) as well as various benchmark MA trajectories. The results further reveal that the designed 1D/2D MA trajectories yield steering vectors with low angular-domain correlation, thereby mitigating estimation ambiguity and enhancing AoA estimation accuracy.
	\end{itemize}
	
	The rest of this paper is organized as follows. Section II and Section III introduce the system model and analyze the CRB of AoA estimation MSE for 1D and 2D antenna movement, respectively. In Section IV, an alternating optimization algorithm is proposed to address the trajectory optimization problem for the 2D case. Section V presents numerical results as well as detailed discussions, and Section VI concludes the paper.
	
	\textit{Notations}: Vectors (lowercase) and matrices (uppercase) are denoted in boldface. The transpose and Hermitian transpose are represented by $(\cdot)^{\mathsf T}$ and $(\cdot)^{\mathsf H}$, respectively. The sets of $P \times Q$ complex and real matrices are denoted by $\mathbb{C}^{P \times Q}$ and $\mathbb{R}^{P \times Q}$, respectively. $\bm{I}_N$ represents the $N$-dimensional identity matrix. For any real number $a$, $\lceil a \rceil$, $\lfloor a \rfloor$, and $\langle a \rangle$ denote its ceiling, floor, and rounded value, respectively. The $\ell_2$-norm of vector $\bm{a}$ is given by $\|\bm{a}\|_2$.

	\begin{figure}[!t]
		\centering
		\includegraphics[width=75mm]{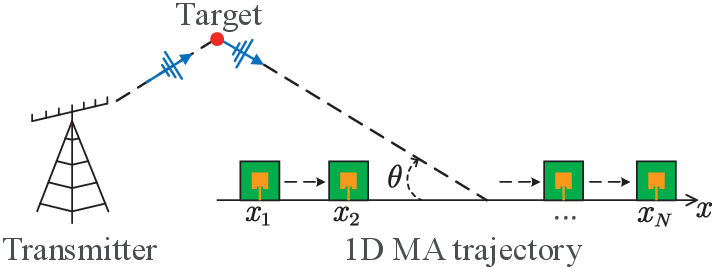}
		\caption{System model for the 1D MA-aided sensing system.}
		\label{fig_model_1D}
	\end{figure}
	
	\section{CRB Characterization for 1D Antenna Movement}
	
	\subsection{System Model}
		As shown in Fig.~\ref{fig_model_1D}, we consider a bistatic sensing system where a receiver is equipped with a single MA constrained to move along a 1D line segment of length $A$. To estimate the AoA of the target, the sensing transmitter continuously emits probing signals, while the single MA continuously moves and receives the echoes reflected by the target over $N$ snapshots. The MA receives one signal per snapshot, with the signal interval denoted by $T_s$. Thus, the total time for antenna movement and sensing is denoted by $T=NT_s$. We set the start time $t_1 = 0$ when the MA is at its initial position $x_1\in[0,A]$\footnote{Due to the limited moving speeds of practical MAs and the short duration of each probing signal, it is justified to assume that the MA's position remains unchanged within each snapshot.}, while the time of receiving the $n$th ($n=1, \dots, N$) signal is denoted by $t_n = (n-1)T_s$. The position of the MA when receiving the $n$th signal is denoted by $x_n\in[0,A]$, and the antenna position vector (APV) over the $N$ snapshots is denoted by $\bm{x} \coloneqq [x_1, \dots, x_N]^{\mathsf T}$. The antenna's velocity between position $x_n$ and $x_{n+1}$ is denoted by $v_n$ ($n=1, \dots, N-1$), which is constrained by a maximum speed $v^{\rm m}$ with $v^{\rm m}>0$, i.e., $-v^{\rm m}<v_n<v^{\rm m}$. Then, the antenna position $x_n$ ($n=2, \dots, N$) can be expressed as a function of its initial position $x_1$ and the velocity sequence $\{v_n\}_{n=1}^{N-1}$:
	\begin{align}
		x_n &= x_{n-1} + v_{n-1}(t_n - t_{n-1}) \notag\\
		&= x_1 + T_s \sum_{m=1}^{n-1} v_m.
	\end{align}
	We consider that the target-receiver link is dominated by a line-of-sight
	(LoS) channel path, which is invariant over $N$ snapshots. Since the distance between the target and receiver is much
	larger than the region size for antenna movement, the far-field channel model is considered for the target-receiver link. As illustrated in Fig.~\ref{fig_model_1D}, the physical AoA of
	the target is denoted by $\theta \in [0, \pi]$. For
	simplicity, we define the spatial AoA as $u \coloneqq \cos\theta$. Then, the steering vector of the MA over $N$ snapshots can be represented as a function of the APV $\bm{x}$ and the spatial AoA $u$, i.e.,
	\begin{equation}
		\bm{a}(\bm{x}, u) \coloneqq \left[ e^{j\frac{2\pi}{\lambda}x_1 u}, \dots, e^{j\frac{2\pi}{\lambda}x_N u} \right]^{\mathsf T}\in \mathbb{C}^{N\times1},
	\end{equation}
	where $\lambda$ is the carrier wavelength. Moreover, Let $\beta$ denote the complex channel coefficient of the target–receiver link measured at the origin of the line segment. Accordingly, the LoS channel vector between the target and receiver can be expressed as
	\begin{equation}
		\bm{h} = \beta \bm{a}(\bm{x}, u).	
	\end{equation}
	We assume that the transmitter continuously emits a fixed omnidirectional probing signal. Then, the vector of $N$ received signals can be written as
	\begin{equation}
		\bm{y} = \bm{h}s + \bm{z},
	\end{equation}
	where $s$ is the reflected signal from the target with average power $P = \mathbb{E}\{|s|^2\}$, and $\bm{z} \sim \mathcal{CN}(0, \sigma^2 \bm{I}_N)$ is the additive white Gaussian noise (AWGN) vector at the receiver.

	\subsection{AoA Estimation}
	For a given MA trajectory $\bm{x}$, we use the maximum likelihood estimation (MLE) method to estimate the spatial AoA $u$. Specifically, the unknown parameters $\tilde{\beta}\coloneqq\beta s$ and $u$ are jointly estimated by solving
	\begin{align}\label{hatbetaeta}
		\left(\hat{\beta},\hat{u}\right) = \arg\min_{\bar{\beta},\bar{u}} \|\bm{y} - \bar{\beta}\bm{a}(\bm{x},\bar{u})\|_2^2.
	\end{align}
	For any fixed $u$, the optimal estimation of $\tilde{\beta}$ is given by
	\begin{align}\label{hatbeta}
		\hat{\beta} = \frac{\bm{a}(\bm{x},u)^{\mathsf H} \bm{y}}{\|\bm{a}(\bm{x},u)\|_2^2}.
	\end{align}
	Then, substituting \eqref{hatbeta} into \eqref{hatbetaeta} yields
	\begin{align}
		&\|\bm{y} - \hat{\beta}\bm{a}(\bm{x},u)\|_2^2 \\
		=& \left(\bm{y} - \hat{\beta}\bm{a}(\bm{x},u)\right)^{\mathsf H} \left(\bm{y} - \hat{\beta}\bm{a}(\bm{x},u)\right) \notag\\
		=&\|\bm{y}\|_2^2 + |\hat{\beta}|^2 \|\bm{a}(\bm{x},u)\|_2^2  - 2\Re\left\{\hat{\beta}\bm{y}^{\mathsf H} \bm{a}(\bm{x},u)\right\} \notag\\
		=&\|\bm{y}\|_2^2 - \frac{\left|\bm{y}^{\mathsf H} \bm{a}(\bm{x},u)\right|^2}{\|\bm{a}(\bm{x},u)\|_2^2} \notag\\
		=&\|\bm{y}\|_2^2 - \frac{1}{N}\left|\bm{y}^{\mathsf H} \bm{a}(\bm{x},u)\right|^2. \notag
	\end{align}
	Since $\|\bm{y}\|_2^2$ is independent of $u$, the MLE of $u$ is equivalently expressed as
	\begin{align}\label{MLE1D}
		\hat{u} &= \arg\min_{\bar{u}\in[-1,1]} \|\bm{y}\|_2^2 - \frac{1}{N}\left|\bm{y}^{\mathsf H} \bm{a}(\bm{x},u)\right|^2 \\
		&= \arg\max_{\bar{u}\in[-1,1]} \left|\bm{y}^{\mathsf H} \bm{a}(\bm{x},u)\right|^2, \notag
	\end{align}
	which can be solved by exhaustively searching for $\bar{u}$ over the interval $[-1,1]$. Then, the MSE of AoA estimation is given by	
	\begin{align}
		{\rm MSE}(u) \coloneqq \mathbb{E}{|u-\hat{u}|^2}.
	\end{align}	
	Thus, the CRB that serves as the lower-bound of ${\rm MSE}(u)$ can be expressed as \cite{ma2024MAsensing,ma2025MAISAC}	
	\begin{align}\label{CRB1D}
		{\rm MSE}(u) \geq {\rm CRB}_u(\bm{x}) = \frac{\sigma^2\lambda^2}{8\pi^2PN|\beta|^2} \frac{1}{{\rm var}(\bm{x})},
	\end{align}	
	where the variance function is defined as
	${\rm var}(\bm{x}) \coloneqq \frac{1}{N}\sum_{n=1}^{N}x_n^2 - \mu(\bm{x})^2$,
	with $\mu(\bm{x})=\frac{1}{N}\sum_{n=1}^{N}x_n$ denoting the mean of the APV $\bm{x}$. The result in \eqref{CRB1D} shows that the CRB scales inversely with the variance of the antenna positions, which characterizes the effective aperture of the time-synthesized virtual array. Consequently, minimizing the CRB requires maximizing this variance through trajectory design, determined by the initial position $x_1$ and the antenna velocity vector (AVV) $\bm{v} \coloneqq [v_1, \dots, v_{N-1}]^{\mathsf T}$.
	
	\subsection{CRB Minimization}
	In this subsection, our goal is to minimize ${\rm CRB}_u(\bm{x})$ by optimizing the APV $\bm{x}$ and AVV $\bm{v}$. From \eqref{CRB1D}, this objective is equivalently expressed as
	\begin{align}
		\min_{\bm{x},\bm{v}}{\rm{CRB}}_u(\bm{x}) \iff \max_{\bm{x},\bm{v}}{\rm{var}}(\bm{x}).
	\end{align}	
	The optimization problem is thus formulated as
	\begin{subequations}
		\begin{align}
			\textrm{(P1)}~~\max_{\bm{x},\bm{v}} \quad & \text{var}(\bm{x}) \label{P1a}\\
			\text{s.t.} \quad & -v^{\rm m} \le v_n \le v^{\rm m},~~ n=1, \dots, N-1, \label{P1b}\\
			& 0 \le x_n \le A,~~ n=1, \dots, N, \label{P1c}\\
			& x_n = x_1 + T_s \sum_{m=1}^{n-1} v_m,~~ n=2, \dots, N. \label{P1d}
		\end{align}
	\end{subequations}
	Problem (P1) is non-convex due to the non-concave objective function. Let $\Delta \coloneqq v^{\rm m}T_s$ denote the maximum spacing of MA's positions over two adjacent snapshots. The globally optimal solution to problem (P1) is given by the following theorems for two distinct cases, respectively.
	\begin{theorem}
		In the \textbf{\textit{time-constrained (TC)}} regime with $A \geq (N-1)\Delta$, an optimal AVV $\bm{v}^{\rm TC}$ that minimizes the CRB of AoA estimation MSE along the 1D line segment is
		\begin{equation}\label{vTC}
			v_n^{\rm TC} = v^{\rm m}, \quad n=1,\ldots,N-1.
		\end{equation}
		The corresponding APV $\bm{x}^{\rm TC}$ is
		\begin{align}\label{xTC}
			x_n^{\rm TC} = x_1^{\rm TC} + (n-1)\Delta, \quad n=2,\ldots,N,
		\end{align}
		where $x_1^{\rm TC}$ satisfies
		\begin{align}
			0 \leq x_1^{\rm TC} \leq A - (N-1)\Delta.
		\end{align}
		The resulted minimum CRB is given by
		\begin{align}\label{crbinf_main}
			{\rm CRB}_u\left(\bm{x}^{\rm TC}\right) = \frac{3\sigma^2\lambda^2}{2\pi^2P|\beta|^2\Delta^2}  \frac{1}{N(N^2-1)},
		\end{align}		
		which decreases with $N$ in the order of $\mathcal{O}(N^{-3})$.
	\end{theorem}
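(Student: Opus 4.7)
The plan is to exploit two structural features of problem (P1). First, because ${\rm var}(\bm{x})$ is shift-invariant and depends on the APV only through pairwise differences, the absolute position constraint \eqref{P1c} can be temporarily set aside, leaving the velocity cap \eqref{P1b} as the only binding constraint on those differences. Second, in the TC regime the aperture $A$ is large enough that \eqref{P1c} is inactive whenever the velocity cap is tight; the unconstrained optimum over $\bm{v}$ therefore coincides with the optimum of (P1) itself.

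The first concrete step is to rewrite the variance in the symmetric form ${\rm var}(\bm{x}) = \frac{1}{2N^2}\sum_{n=1}^N\sum_{m=1}^N (x_n - x_m)^2$. Applying \eqref{P1d} gives $x_n - x_m = T_s\sum_{k=\min(n,m)}^{\max(n,m)-1} v_k$ up to a sign, and the velocity cap together with the triangle inequality yields the pairwise bound $|x_n - x_m| \le |n-m|\Delta$. Summing over all $N^2$ pairs produces the combinatorial upper bound ${\rm var}(\bm{x}) \le \frac{\Delta^2}{2N^2}\sum_{n,m}(n-m)^2$, which is independent of the initial position $x_1$.

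The second step is to verify attainability. Choosing $v_n^{\rm TC} = v^{\rm m}$ for all $n$ makes every partial sum $\sum_k v_k$ achieve its maximum modulus, so $|x_n - x_m| = |n-m|\Delta$ holds \emph{simultaneously} for every pair, tightening the bound everywhere at once. The resulting APV $x_n^{\rm TC} = x_1^{\rm TC} + (n-1)\Delta$ fits inside $[0,A]$ iff $0 \le x_1^{\rm TC} \le A - (N-1)\Delta$, which is non-empty precisely because of the TC hypothesis $A \ge (N-1)\Delta$; the freedom in $x_1^{\rm TC}$ reflects the shift-invariance of the objective, and a symmetric choice $v_n \equiv -v^{\rm m}$ yields an equally optimal mirrored trajectory.

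Finally, I would evaluate the closed-form identity $\sum_{n,m=1}^N(n-m)^2 = 2\sum_{k=1}^{N-1}k^2(N-k) = \frac{N^2(N^2-1)}{6}$, which gives ${\rm var}(\bm{x}^{\rm TC}) = \frac{\Delta^2(N^2-1)}{12}$. Substitution into \eqref{CRB1D} then delivers \eqref{crbinf_main}, and inspecting the $N$-dependence yields the stated $\mathcal{O}(N^{-3})$ scaling. The only subtle point is the uniform tightness argument: one must check that no alternation of signs in $\bm{v}$ can improve the variance, but the pairwise bound $|x_n-x_m|\le|n-m|\Delta$ forecloses any cancellation-then-recovery gain, so equality in every pair forces the velocities, up to a global sign, to saturate at $\pm v^{\rm m}$.
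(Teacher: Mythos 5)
Your proof is correct and follows essentially the same route as the paper: both express ${\rm var}(\bm{x})$ as a normalized sum of squared pairwise differences, bound each difference by $|n-m|\Delta$, and show the bound is attained simultaneously for all pairs by moving at maximum speed, with the TC hypothesis $A \ge (N-1)\Delta$ guaranteeing feasibility of the resulting APV in $[0,A]$. The only (minor, and arguably cleaner) difference is that you obtain the pairwise bound directly from the triangle inequality on the velocity partial sums, whereas the paper first reduces to a monotone trajectory via an equivalent-trajectory argument before telescoping consecutive gaps.
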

	\begin{proof}
		See Appendix A.
	\end{proof}

	\begin{theorem}		
		In the \textbf{\textit{space-constrained (SC)}} regime with $A < (N-1)\Delta$, the optimal AVV $\bm{v}^{\rm SC}$ is
		\begin{align}\label{vRL_main}
			v_n^{\rm SC} =
			\begin{cases}
				0, & n=1,\ldots,N_{\rm L}-1, \\[0.5em]
				v^{\rm m}, & n=N_{\rm L},\ldots,N-N_{\rm R}-1, \\[0.5em]
				\dfrac{A-N_{\rm M}\Delta}{T_s}, & n=N-N_{\rm R}, \\[0.5em]
				0, & n=N-N_{\rm R}+1,\ldots,N-1,
			\end{cases}
		\end{align}
		with $N_{\rm M} = \left\lceil\frac{A}{\Delta}\right\rceil - 1$,  
		$N_{\rm L} = \left\lceil \frac{N-N_{\rm M}}{2} \right\rceil$,  
		and $N_{\rm R} = \left\lfloor \frac{N-N_{\rm M}}{2} \right\rfloor$. The corresponding APV $\bm{x}^{\rm SC}$ is
		\begin{align}\label{xRL3_main}
			x_n^{\rm SC} =
			\begin{cases}
				0, & n=1,\ldots,N_{\rm L}, \\[0.5em]
				(n-N_{\rm L})\Delta, & n=N_{\rm L}+1,\ldots,N-N_{\rm R}, \\[0.5em]
				A, & n=N-N_{\rm R}+1,\ldots,N.
			\end{cases}
		\end{align}
		The resulted minimum CRB, given in \eqref{crbfin1} and \eqref{crbfin2} of Appendix B, decreases with $N$ in the order of $\mathcal{O}(N^{-1})$ and decreases with $A$ in the order of $\mathcal{O}(A^{-2})$.
	\end{theorem}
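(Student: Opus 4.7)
The plan is to reduce (P1) to maximizing a convex quadratic over a polytope of sorted sequences, then to characterize the extreme points of that polytope and identify the optimal one via two swap arguments together with a reflection symmetry.

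I would first reduce to sorted trajectories. Since ${\rm var}(\bm{x})$ depends only on the multiset of positions, the key observation is that a multiset in $[0,A]$ is realizable by some feasible trajectory if and only if its sorted version $y_1\le\cdots\le y_N$ satisfies $y_{k+1}-y_k\le\Delta$ for all $k$. Necessity follows from a cut argument: if $y_{k+1}-y_k>\Delta$, the time indices partition into $\{n:x_n\le y_k\}$ and $\{n:x_n\ge y_{k+1}\}$, so any transition between these sets in the time-ordered trajectory would require a velocity exceeding $v^{\rm m}$, contradicting \eqref{P1b}. Sufficiency is immediate by visiting in sorted order. Thus (P1) reduces to maximizing ${\rm var}(\bm{y})$ over the polytope $\mathcal{Y}\coloneqq\{\bm{y}:0\le y_1\le\cdots\le y_N\le A,\ y_{k+1}-y_k\le\Delta\}$. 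Because ${\rm var}$ is convex in $\bm{y}$ and $\mathcal{Y}$ is a convex polytope, the maximum is attained at a vertex of $\mathcal{Y}$.

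I would next characterize the candidate vertices. A first-order perturbation argument forces both boundary constraints to bind at the global optimum ($y_1=0$ and $y_N=A$), since $\partial{\rm var}/\partial y_1=\tfrac{2}{N}(y_1-\mu)\le 0$ and $\partial{\rm var}/\partial y_N\ge 0$, so slack in either boundary together with slack in a neighbouring gap allows a strict improvement via a translation followed by an expansion. With $y_1=0$ and $y_N=A$ active, exactly $N-2$ of the $N-1$ gap constraints must bind at a vertex. Writing $d_k\coloneqq y_{k+1}-y_k$, the sum $\sum_k d_k=A$ then pins down $N_M=\lceil A/\Delta\rceil-1$ gaps equal to $\Delta$, one fractional gap $r=A-N_M\Delta\in(0,\Delta]$, and $N-1-N_M$ zero gaps that collapse the sequence into clusters of equal positions.

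The crux is locating these gaps within the sequence. Rewriting the variance as the quadratic form ${\rm var}(\bm{y})=\tfrac{1}{N^2}\bm{d}^{\mathsf T}W\bm{d}$ with $W_{k,l}=\min(k,l)(N-\max(k,l))$, the kernel $W$ (the discrete Brownian-bridge covariance) has its largest entries at central indices. A pairwise swap of a zero gap with a $\Delta$ gap shows that the nonzero gaps must concentrate in the middle of the sequence, so the $N$ positions split into two clusters at $0$ and $A$ connected by a strictly increasing ramp. A second swap shows the fractional gap $r$ is best placed adjacent to one cluster rather than inside the ramp, pinning the intermediate positions to $\{\Delta,2\Delta,\ldots,N_M\Delta\}$ with unit multiplicity --- recovering $\bm{x}^{\rm SC}$ in \eqref{xRL3_main}.

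Finally, with cluster sizes $(N_L,N_R)$ satisfying $N_L+N_R=N-N_M$, the variance becomes an explicit concave quadratic in $N_L$. Using the reflection symmetry $\bm{y}\mapsto A-\bm{y}$, which swaps $(N_L,N_R)$ while preserving ${\rm var}$, the integer maximum is attained at the most balanced split $N_L=\lceil(N-N_M)/2\rceil$ and $N_R=\lfloor(N-N_M)/2\rfloor$. The AVV $\bm{v}^{\rm SC}$ in \eqref{vRL_main} is then read off from consecutive differences of $\bm{x}^{\rm SC}$. Substituting $\bm{x}^{\rm SC}$ into \eqref{CRB1D} and splitting by the parity of $N-N_M$ produces the closed-form CRB in \eqref{crbfin1} and \eqref{crbfin2}. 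Asymptotically, the cluster terms dominate $\sum x_n^2$, giving ${\rm var}(\bm{x}^{\rm SC})\to A^2/4$ as $N\to\infty$ and hence CRB $=\mathcal{O}(N^{-1})$, while the leading $A^2$ factor in ${\rm var}$ gives CRB $=\mathcal{O}(A^{-2})$. The hard part will be the two swap arguments in the third paragraph: the central-peak structure of $W$ makes the preferred placement transparent, but turning this intuition into quantitative inequalities that rule out every other vertex configuration is the main technical task.
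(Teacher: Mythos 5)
Your route is genuinely different from the paper's: you pass to the sorted gap vector $\bm{d}$, use convexity of ${\rm var}$ to restrict to vertices of the polytope $\{0\le y_1\le\cdots\le y_N\le A,\ y_{k+1}-y_k\le\Delta\}$, and then try to locate the at-most-one fractional gap and the $N_{\rm M}$ gaps equal to $\Delta$ by exchange arguments on the quadratic form $\bm{d}^{\mathsf T}\bm{W}\bm{d}$ with $W_{k,l}=\min(k,l)(N-\max(k,l))$. The paper instead works coordinate-by-coordinate: it fixes all positions but $x_n$, observes that ${\rm var}$ is a convex quadratic in $x_n$ so the optimum sits at an endpoint of $[\max(x_{n+1}-\Delta,x_{n-1}),\min(x_{n-1}+\Delta,x_{n+1})]$, and then propagates the choice of endpoint monotonically in $n$ (using that $\mu(\bm{\bar{x}}_n)$ is non-increasing while the interval endpoints are non-decreasing) to force the cluster--ramp--cluster structure, before separately pinning down the fractional gap, the boundary values, and the balanced split $N_{\rm L}=\lceil(N-N_{\rm M})/2\rceil$. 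Your reduction to sorted sequences (via the cut argument), the vertex count giving exactly one fractional gap, and the final concave-in-$N_{\rm L}$ optimization are all correct and match or cleanly replace the corresponding steps in the paper.

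The genuine gap is exactly where you flag it, and it is not merely a routine verification. A single swap of a zero gap at index $i$ with a $\Delta$ gap at index $j$ changes the objective by $\Delta^2(W_{ii}-W_{jj})+2\Delta\sum_{k\neq i,j}(W_{ik}-W_{jk})d_k$, and the off-diagonal differences $W_{ik}-W_{jk}$ do not have a uniform sign: for $k<\min(i,j)$ one gets $k(j-i)$, for $k>\max(i,j)$ one gets $(i-j)(N-k)$, and for $k$ between $i$ and $j$ the sign is configuration-dependent. So even when $i$ is more central than $j$, the swap can be non-improving depending on where the remaining nonzero gaps sit; ``centrality of the diagonal of $W$'' alone does not yield a monotone exchange. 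Closing this requires something stronger than pairwise swaps on an arbitrary vertex --- e.g., an induction that first establishes contiguity of the nonzero block (a zero gap strictly between two nonzero gaps can be moved to whichever end increases the objective, using the one-dimensional convexity of the objective in the block's location), and only then centers the block; this is essentially the work the paper's propagation lemma does for you. A second, smaller soft spot is your claim that both boundary constraints bind: translating to force $y_1=0$ is free, but pushing $y_N$ up to $A$ when $d_{N-1}=\Delta$ requires the cluster structure you have not yet established at that point in the argument (the paper invokes it only after the structure is known). As written, the proposal is a plausible program with its central combinatorial lemma unproved, so it does not yet constitute a proof of Theorem~2.
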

	\begin{proof}
		See Appendix B.
	\end{proof}	
	For each case, there exists a symmetric counterpart obtained by mirroring the APV and reversing the velocity directions. Specifically, for $(\bm{x},\bm{v}) \in \{(\bm{x}^{\rm TC},\bm{v}^{\rm TC}), (\bm{x}^{\rm SC},\bm{v}^{\rm SC})\}$, the mirrored optimal solution $(\tilde{\bm{x}},\tilde{\bm{v}})$ is given by
	\begin{align}
		\tilde{x}_n &= A - x_{N+1-n}, \quad n=1,\ldots,N, \notag\\
		\tilde{v}_n &= -v_{N-n}, \quad n=1,\ldots,N-1.
	\end{align}

	Theorem~1 and Theorem~2 reveal different structures in the optimal trajectory design depending on whether the line segment length $A$ is sufficiently large relative to the maximum trajectory distance of the MA. In the \textbf{\textit{TC}} regime with $A \geq (N-1)\Delta$, the optimal strategy is to let the MA move continuously with the maximum speed $v^{\rm m}$, thereby generating a uniformly spaced APV that maximizes the virtual array aperture as shown in Fig.~\ref{optimal_position_1D}(a). This configuration achieves a CRB scaling order of $\mathcal{O}(N^{-3})$, where the $\mathcal{O}(N^{-2})$ component results from the variance of the virtual aperture created by antenna movement, while the additional $\mathcal{O}(N^{-1})$ component results from the accumulation of sensing signals over time. In contrast, in the \textbf{\textit{SC}} regime with $A < (N-1)\Delta$, the optimal trajectory partitions the APV into three groups as shown in Fig.~\ref{optimal_position_1D}(b): a subset fixed at the left-most position, a subset fixed at the right-most position, and a middle subset moving with maximum speed. This hybrid arrangement maximizes the variance of MA's positions within the limited aperture, leading to a CRB scaling order of $\mathcal{O}(N^{-1}A^{-2})$, where the $\mathcal{O}(N^{-1})$ component results from the sensing time, while the $\mathcal{O}(A^{-2})$ component results from the variance of the maximum aperture $A$. Overall, Theorem~1 and Theorem~2 reveal the importance of increasing both the sensing time and antenna movement region size for decreasing the CRB.

	\begin{figure}[!t]
		\centering
		\includegraphics[width=70mm]{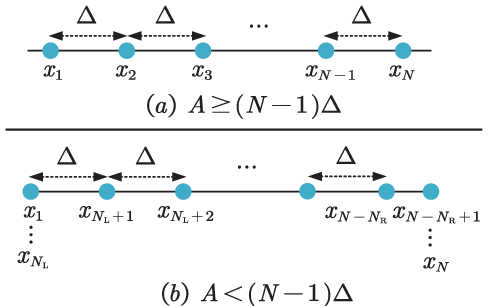}
		\caption{Illustration of the optimal MA trajectories for 1D antenna movement.}
		\label{optimal_position_1D}
	\end{figure}

	\subsection{Comparison with FPA Array}	
	In this subsection, we compare the proposed single MA-aided sensing system with a conventional FPA array-aided sensing system. Consider an FPA receiver equipped with an $M$-antenna ULA with half-wavelength spacing, where the APV of $M$ antennas is denoted by $\bm{x}_{\rm FPA}\in\mathbb{R}^{M\times1}$. The FPA receives $N$ reflected signals from the target to estimate the spatial AoA $u$, with the corresponding CRB given by \cite{ma2024MAsensing,ma2025MAISAC}
	\begin{align}\label{crbFPA1D}
		\overline{\rm CRB}_u(\bm{x}_{\rm FPA}) &=\frac{\sigma^2 \lambda^2}{8 \pi^2 P |\beta|^2MN} \frac{1}{\text{var}(\bm{x}_{\rm FPA})} \notag\\		
		&=\frac{\sigma^2 \lambda^2}{8 \pi^2 P |\beta|^2MN} \frac{12}{\left(\frac{\lambda}{2}\right)^2 (M^2-1)} \notag\\
		&= \frac{6 \sigma^2}{\pi^2 P |\beta|^2} \frac{1}{N M(M^2-1)}.
	\end{align}
	This indicates that an $M$-antenna 1D FPA array achieves a CRB scaling order of $\mathcal{O}(N^{-1}M^{-3})$, where the $\mathcal{O}(M^{-2})$ component results from the array aperture, the additional $\mathcal{O}(M^{-1})$ component from the array gain, and the $\mathcal{O}(N^{-1})$ component from the sensing time. Considering the case of $A \geq (N-1)\Delta$ (i.e., the TC regime), and using the approximations $N^2-1 \approx N^2$ and $M^2-1 \approx M^2$ for sufficiently large $N$ and $M$, the relative performance of the single MA-aided sensing system compared to the FPA array-aided system can be expressed as
	\begin{align}
		\frac{{\rm CRB}_u\left(\bm{x}^{\rm TC}\right)}{\overline{\rm CRB}_u(\bm{x}_{\rm FPA})} \approx \frac{\lambda^2 M^3 }{4 \Delta^2 N^2}.
	\end{align}
	Thus, for the MA system to outperform the conventional FPA system, i.e., ${\rm CRB}_u\left(\bm{x}^{\rm TC}\right) < \overline{\rm CRB}_u(\bm{x}_{\rm FPA})$, the number of snapshots $N$ should satisfy
	\begin{align}
		N > \frac{M^{\frac{3}{2}}}{2} \frac{\lambda}{\Delta}.
	\end{align}
	Since $T = NT_s$ and $\Delta = v^{\rm m}T_s$, this also implies
	\begin{align}
		T > \frac{M^{\frac{3}{2}}}{2} \frac{\lambda}{v^{\rm m}}.
	\end{align}
	This comparison demonstrates that, with sufficient antenna movement speed $v^{\rm m}$ and sensing time $T$ (as well as antenna movement line length), the single MA-aided system can achieve superior AoA estimation accuracy compared to a conventional FPA array. For instance, with $v^{\rm m} = 1$ m/s and $\lambda = 0.01$ m, the single MA system can outperform the conventional FPA array with $M=64$ antennas when $T>2.56$ s, which is easily attainable in practical scenarios with low-mobility targets.

	\begin{figure}[!t]
		\centering
		\includegraphics[width=75mm]{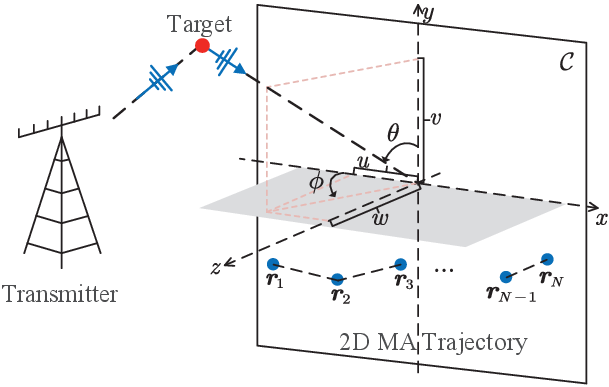}
		\caption{System model for the 2D MA-aided sensing system.}
		\label{fig_model_2D}
	\end{figure}
	
	\section{CRB Characterization for 2D Antenna Movement}
	
	\subsection{System Model}
	As illustrated in Fig.~\ref{fig_model_2D}, we consider a wireless sensing system with an MA moving on a 2D plane to estimate the target’s AoAs w.r.t. the $x$ and $y$ axes. The MA moves continuously within a 2D region denoted by $\mathcal{C}$. Let $\bm{r}_n = [x_n, y_n]^{\mathsf T} \in \mathcal{C}$ denote the MA's coordinates at the $n$th snapshot ($n=1,2,\ldots,N$), and the antenna position matrix (APM) over $N$ snapshots is denoted by $\bm{R} = [\bm{r}_1, \bm{r}_2, \ldots, \bm{r}_N] \in \mathbb{R}^{2\times N}$. The trajectory is determined by the initial position $\bm{r}_1$ and the antenna velocity matrix (AVM) $\bm{V} = [\bm{v}_1, \dots, \bm{v}_{N-1}] \in \mathbb{R}^{2\times (N-1)}$, where $\bm{v}_n = [v_n^x, v_n^y]^{\mathsf T}$ and $\|\bm{v}_n\|_2 \le v^{\rm m}$. The MA's position $\bm{r}_n$ ($n=2,\ldots,N$) is thus updated according to
	\begin{equation}
		\bm{r}_n = \bm{r}_1 + T_s \sum_{m=1}^{n-1} \bm{v}_m.
	\end{equation}
	The physical elevation and azimuth AoAs of the target–receiver LoS path are denoted by $\theta \in [0, \pi]$ and $\phi \in [0, \pi]$, respectively. For convenience, we define the corresponding spatial AoAs w.r.t. the $x$ and $y$ axes as
	\begin{equation}
		u = \sin\theta \cos\phi, \quad v = \cos\theta.
	\end{equation}
	Accordingly, the steering vector associated with the 2D MA trajectory can be expressed as a function of the MA's positions $\bm{R}$ and the spatial AoAs $\bm{\eta} \coloneqq [u, v]^{\mathsf T}$:
	\begin{equation}\label{g2}
		\bm{\alpha}(\bm{R},\bm{\eta}) \coloneqq \left[ e^{j\frac{2\pi}{\lambda}(x_1 u + y_1 v)}, \ldots, e^{j\frac{2\pi}{\lambda}(x_N u + y_N v)} \right]^{\mathsf T} \in \mathbb{C}^{N\times1}.
	\end{equation}
	The LoS channel between the target and receiver is thus given by
	\begin{equation}\label{H2}
		\bm{h}(\bm{R},\bm{\eta}) = \beta \bm{\alpha}(\bm{R}, \bm{\eta}).
	\end{equation}

	\subsection{AoA Estimation and CRB Minimization}
	Following the approach for 1D antenna movement, the joint MLE of the spatial AoAs $u$ and $v$ for 2D antenna movement is given by \cite{shao2022target}
	\begin{equation}\label{MLE2D}
		\hat{\bm{\eta}} = \arg\max_{\bar{\bm{\eta}} \in [-1,1]\times[-1,1]} \left| \bm{y}^{\mathsf H} \bm{\alpha}(\bm{R}, \bar{\bm{\eta}}) \right|^2,
	\end{equation}
	which can be implemented via exhaustive search over $\bar{\bm{\eta}} = [\bar{u}, \bar{v}]^{\mathsf T}$ within the interval $[-1,1]\times[-1,1]$.
	For the MA-aided sensing system with 2D antenna movement region $\mathcal{C}$, the two CRBs for AoA estimation w.r.t. the $x$ and $y$ axes are given by \cite{ma2024MAsensing,ma2025MAISAC}
	\begin{align}\label{CRBr}
		{\rm MSE}(u) \ge {\rm CRB}_u(\bm{R}) &= \frac{\sigma^2 \lambda^2}{8 \pi^2 P  |\beta|^2N} \frac{1}{{\rm var}(\bm{x}) - \frac{{\rm cov}(\bm{x}, \bm{y})^2}{{\rm var}(\bm{y})}}, \notag\\
		{\rm MSE}(v) \ge {\rm CRB}_v(\bm{R}) &= \frac{\sigma^2 \lambda^2}{8 \pi^2 P  |\beta|^2N} \frac{1}{{\rm var}(\bm{y}) - \frac{{\rm cov}(\bm{x}, \bm{y})^2}{{\rm var}(\bm{x})}},
	\end{align}
	where $\bm{x} = [x_1, \dots, x_N]^{\mathsf T}$, $\bm{y} = [y_1, \dots, y_N]^{\mathsf T}$, and the covariance function is defined as ${\rm cov}(\bm{x},\bm{y}) = \frac{1}{N} \sum_{n=1}^N x_n y_n - \mu(\bm{x}) \mu(\bm{y})$.
	Equation \eqref{CRBr} indicates that the two CRBs depend explicitly on the MA trajectory $\bm{R}$. In particular, the two CRBs decrease as ${\rm var}(\bm{x}) - \frac{{\rm cov}(\bm{x},\bm{y})^2}{{\rm var}(\bm{y})}$ and ${\rm var}(\bm{y}) - \frac{{\rm cov}(\bm{x},\bm{y})^2}{{\rm var}(\bm{x})}$ increase, respectively. Therefore, the MA trajectory can be optimized to jointly minimize ${\rm CRB}_u(\bm{R})$ and ${\rm CRB}_v(\bm{R})$. Intuitively, this can be achieved by maximizing the spreads of $\bm{x}$ and $\bm{y}$ while ensuring symmetry of the trajectory w.r.t. the $x$ and $y$ axes to reduce ${\rm cov}(\bm{x}, \bm{y})$. Nevertheless, a trade-off generally exists between minimizing ${\rm CRB}_u(\bm{R})$ and ${\rm CRB}_v(\bm{R})$ due to the coupling between the variances and the covariance of the trajectory components.

	Our goal is to minimize the maximum CRB of the estimation MSE for $u$ and $v$ (i.e., achieve the min-max CRB) by optimizing the APM $\bm{R}$ and AVM $\bm{V}$. From \eqref{CRBr}, this objective can be equivalently expressed as
	\begin{align}
		&\min_{\bm{R},\bm{V}}\max  ~ [{\rm{CRB}}_u(\bm{R}), {\rm{CRB}}_v(\bm{R})] \iff \\
		& \max_{\bm{R},\bm{V}}\min  ~ \left[{\rm{var}}(\bm{x})-\frac{{\rm{cov}}(\bm{x},\bm{y})^2}{{\rm{var}}(\bm{y})}, {\rm{var}}(\bm{y})-\frac{{\rm{cov}}(\bm{x},\bm{y})^2}{{\rm{var}}(\bm{x})}\right].\notag
	\end{align}
	Accordingly, the 2D antenna trajectory optimization problem can be formulated as follows:
	\begin{subequations}
		\begin{align}
			\textrm {(P2)}~~\max_{\bm{R},\bm{V},\delta} \quad & \delta \label{P2a}\\
			\text{s.t.} \quad & {\rm{var}}(\bm{x})-\frac{{\rm{cov}}(\bm{x},\bm{y})^2}{{\rm{var}}(\bm{y})} \geq \delta, \label{P2b}\\
			&{\rm{var}}(\bm{y})-\frac{{\rm{cov}}(\bm{x},\bm{y})^2}{{\rm{var}}(\bm{x})} \geq \delta, \label{P2c}\\
			& \|\bm{v}_n\|_2 \le v^{\rm m}, \quad n=1, \dots, N-1, \label{P2d} \\
			& \bm{r}_n \in \mathcal{C}, \quad n=1, \dots, N, \label{P2e} \\
			& \bm{r}_n = \bm{r}_1 + T_s \sum_{m=1}^{n-1} \bm{v}_m, \quad n=2, \dots, N. \label{P2f}
		\end{align}
	\end{subequations}
	Since the fractional constraints in \eqref{P2b} and \eqref{P2c} are non-convex w.r.t. $\bm{x}$ and $\bm{y}$, problem (P2) is a non-convex problem and it is difficult to obtain the optimal solution for it. Furthermore, the inherent coupling between $\bm{x}$ and $\bm{y}$ increases the complexity of solving problem (P2).
	
	\subsection{Comparison with FPA Array}
	In this subsection, we compare the proposed single MA-aided sensing system over a 2D antenna movement region with a conventional 2D FPA array-aided sensing system. Since problem (P2) is difficult to be solved optimally, we adopt a heuristic design where the MA moves at maximum speed $v^{\rm m}$ along a circular trajectory with radius
	\begin{align}
		R^{\rm cir} = \frac{\Delta}{2\sin\left(\frac{\pi}{N}\right)}.
	\end{align}
	Accordingly, the largest inscribed circle of the antenna movement region $\mathcal{C}$ must be no smaller than $R^{\rm cir}$. The MA’s position is then given by
	\begin{align}
		x_n &= R^{\rm cir}\cos \left(\tfrac{2\pi n}{N}\right), \quad n=1,\ldots,N, \notag\\
		y_n &= R^{\rm cir}\sin \left(\tfrac{2\pi n}{N}\right), \quad n=1,\ldots,N.
	\end{align}
	This yields
	\begin{align}
		&{\rm{CRB}}_u(\bm{R}) = {\rm{CRB}}_v(\bm{R}) = \frac{\sigma^2 \lambda^2}{\pi^2 P |\beta|^2\Delta^2} \frac{\sin\left(\frac{\pi}{N}\right)^2}{N} \notag\\
		&\overset{(a)}\approx \frac{\sigma^2 \lambda^2}{P |\beta|^2\Delta^2} \frac{1}{N^3},
	\end{align}
	where $(a)$ follows from $\sin(\pi/N)\approx\pi/N$ for sufficiently large $N$. This shows that the circular trajectory achieves a CRB scaling order of $\mathcal{O}(N^{-3})$, where the $\mathcal{O}(N^{-2})$ component results from the variance of virtual aperture and the additional $\mathcal{O}(N^{-1})$ component from sensing time.
	
	Now consider an FPA receiver equipped with a $\sqrt{M}\times\sqrt{M}$ UPA with half-wavelength spacing, where the APM denoted by $\bm{R}_{\rm FPA}=\begin{bmatrix}
		\bm{x}_{\rm FPA}^{\mathsf T} \\
	\bm{y}_{\rm FPA}^{\mathsf T}
	\end{bmatrix}\in\mathbb{R}^{2\times M}$ . The FPA receives $N$ reflected signals to estimate the spatial AoAs $u$ and $v$, with CRB given by \cite{ma2024MAsensing,ma2025MAISAC}
	\begin{align}
		&\overline{\rm CRB}_u(\bm{R}_{\rm FPA}) = \overline{\rm CRB}_v(\bm{R}_{\rm FPA}) =\frac{\sigma^2 \lambda^2}{8 \pi^2 P |\beta|^2MN} \frac{1}{\text{var}(\bm{x}_{\rm FPA})} \notag\\
		&=\frac{\sigma^2 \lambda^2}{8 \pi^2 P |\beta|^2MN} \frac{12}{\left(\frac{\lambda}{2}\right)^2 (M-1)} \notag\\
		&= \frac{6 \sigma^2}{\pi^2 P |\beta|^2} \frac{1}{N M(M-1)}.
	\end{align}
	Thus, a $\sqrt{M}\times\sqrt{M}$ UPA achieves a CRB scaling order of $\mathcal{O}(N^{-1}M^{-2})$, where the $\mathcal{O}(M^{-2})$ component results from both the variance of aperture and array gain, and the $\mathcal{O}(N^{-1})$ component from sensing time.
	Using the approximations $\sin(\pi/N)\approx \pi/N$ and $M-1\approx M$ for large $N$ and $M$, the relative performance of the single MA-aided system (assuming the circular trajectory) over the FPA array-aided system can be expressed as
	\begin{align}
		\frac{{\rm CRB}_u(\bm{R})}{\overline{\rm CRB}_u(\bm{R}_{\rm FPA})} = \frac{{\rm CRB}_v(\bm{R})}{\overline{\rm CRB}_v(\bm{R}_{\rm FPA})}
		\approx \frac{\pi^2 \lambda^2 M^2}{6 \Delta^2 N^2}.
	\end{align}
	Hence, for the considered MA system with a circular trajectory to outperform the conventional FPA system, i.e., ${\rm CRB}_u(\bm{R}) < \overline{\rm CRB}_u(\bm{R}_{\rm FPA})$ and ${\rm CRB}_v(\bm{R}) < \overline{\rm CRB}_v(\bm{R}_{\rm FPA})$, the number of snapshots $N$ should satisfy
	\begin{align}
		N > \frac{\pi M}{\sqrt{6}} \frac{\lambda}{\Delta}.
	\end{align}
	Since $T=NT_s$ and $\Delta=v^{\rm m}T_s$, this condition equivalently requires
	\begin{align}
		T > \frac{\pi M}{\sqrt{6}} \frac{\lambda}{v^{\rm m}}.
	\end{align}
	This comparison shows that, with sufficient antenna movement speed $v^{\rm m}$ and sensing time $T$ (as well as antenna movement region size), the single MA-aided system can outperform a conventional FPA array in AoA estimation accuracy. For example, with $v^{\rm m}=1$ m/s and $\lambda=0.01$ m, the MA system outperforms an FPA array with $M=64$ antennas when $T>0.82$ s, which is smaller and thus more easily attainable in practice compared to that of the previous 1D movement case.
		
	\section{Trajectory Optimization for 2D Antenna Movement}
	Although the constraints \eqref{P2d}, \eqref{P2e}, and \eqref{P2f} in problem (P2) are convex w.r.t. $\bm{V}$, optimizing the MA trajectory in problem (P2) remains challenging due to the non-convex constraints \eqref{P2b} and \eqref{P2c} as well as the coupling between $\bm{x}$ and $\bm{y}$. To tackle this problem, we propose an alternating optimization algorithm that iteratively updates either one of the horizontal APV $\bm{x}$ and the vertical APV $\bm{y}$ while keeping the other fixed, with the convex constraints \eqref{P2d}, \eqref{P2e}, and \eqref{P2f} w.r.t. $\bm{V}$ preserved in each subproblem. We assume that $\mathcal{C}$ is a convex 2D region, which ensures that the constraint in \eqref{P2e} remains convex. In addition, the fractional constraints \eqref{P2b} and \eqref{P2c} are reformulated into quadratic expressions. Specifically, ${\rm{var}}(\bm{x})$, ${\rm{var}}(\bm{y})$, and ${\rm{cov}}(\bm{x},\bm{y})$ can be equivalently expressed as
	\begin{align}
		{\rm{var}}(\bm{x}) &\coloneqq \bm{x}^{\mathsf T} \bm{B} \bm{x}, \notag\\
		{\rm{var}}(\bm{y}) &\coloneqq \bm{y}^{\mathsf T} \bm{B} \bm{y}, \notag\\
		{\rm{cov}}(\bm{x},\bm{y}) &\coloneqq \bm{x}^{\mathsf T} \bm{B} \bm{y},
	\end{align}
	where $\bm{B}\coloneqq \frac{1}{N}\bm{I}_N-\frac{1}{N^2}\bm{1}_N\bm{1}_N^{\mathsf T}$ is a positive semi-definite (PSD) matrix. Thus, the constraints in \eqref{P2b} and \eqref{P2c} can be equivalently rewritten as
	\begin{subequations}
		\begin{align}
			& G(\bm{x},\bm{y})\coloneqq G_1(\bm{x})-  G_2(\bm{x}, \bm{y})\coloneqq\bm{x}^{\mathsf T}\bm{B}\bm{x}-\frac{\left(\bm{x}^{\mathsf T}\bm{B}\bm{y}\right)^2}{\bm{y}^{\mathsf T}\bm{B}\bm{y}} \geq \delta, \label{xBx}\\
			&G(\bm{y},\bm{x}) = G_1(\bm{y})-  G_2(\bm{y}, \bm{x}) = \bm{y}^{\mathsf T}\bm{B}\bm{y}-\frac{\left(\bm{y}^{\mathsf T}\bm{B}\bm{x}\right)^2}{\bm{x}^{\mathsf T}\bm{B}\bm{x}} \geq \delta, \label{yBy}
		\end{align}
	\end{subequations}
	where $G_1(\bm{x})\coloneqq\bm{x}^{\mathsf T}\bm{B}\bm{x}$ and $G_2(\bm{x}, \bm{y})\coloneqq \frac{\left(\bm{x}^{\mathsf T}\bm{B}\bm{y}\right)^2}{\bm{y}^{\mathsf T}\bm{B}\bm{y}}$.
	
	\subsection{Optimization of $\bm{x}$ with Given $\bm{y}$}
	In this subsection, we optimize $\bm{x}$ and $\bm{V}$ in problem (P2) with fixed $\bm{y}$. Since the constraints in \eqref{xBx} and \eqref{yBy} are non-convex w.r.t. $\bm{x}$, we employ the successive convex approximation (SCA) method to convert them into convex constraints. Specifically, let $\bm{x}^p \in \mathbb{R}^{N\times1}$ denote the solution obtained at the $p$th iteration of SCA. Since $G_1(\bm{x})$ is convex w.r.t. $\bm{x}$, it can be globally lower-bounded using its first-order Taylor expansion at $\bm{x}^p$, i.e.,
	\begin{align}\label{G_bar}
		&G_1(\bm{x})\geq \bar{G}_1(\bm{x}|\bm{x}^p)\\
		&\coloneqq G_1(\bm{x}^p) + 2(\bm{x}^p)^{\mathsf T} \bm{B} (\bm{x}-\bm{x}^p) \notag\\
		&= 2(\bm{x}^p)^{\mathsf T} \bm{B} \bm{x} - G_1(\bm{x}^p). \notag
	\end{align}
	Furthermore, when $\bm{y}$ is fixed, $G_2(\bm{x}, \bm{y})$ is convex w.r.t. $\bm{x}$. Thus, at the $p$th iteration of SCA, a convex surrogate function that provides a global lower-bound for $G(\bm{x}, \bm{y})$ can be constructed as
	\begin{align}
		G(\bm{x},\bm{y}) \geq \bar{G}_1(\bm{x}|\bm{x}^p) - G_2(\bm{x}, \bm{y}).
	\end{align}
	Accordingly, the constraint in \eqref{xBx} can be converted into a convex constraint w.r.t. $\bm{x}$ as
	\begin{align}\label{30}
		\bar{G}_1(\bm{x}|\bm{x}^p) - G_2(\bm{x}, \bm{y}) \geq \delta.
	\end{align}
	
	Next, given $\bm{y}$, the constraint in \eqref{yBy} can be equivalently expressed as
	\begin{align}\label{loglog}
		\frac{\left(\bm{y}^{\mathsf T}\bm{B}\bm{x}\right)^2}{\bm{y}^{\mathsf T}\bm{B}\bm{y}-\delta} \leq \bm{x}^{\mathsf T}\bm{B}\bm{x}.
	\end{align}
	It can be observed that the left-hand side of \eqref{loglog} is jointly convex in $\{\bm{x}, \delta\}$, since $\left(\bm{y}^{\mathsf T} \bm{B} \bm{x}\right)^2$ is a convex quadratic function of $\bm{x}$, while $\bm{y}^{\mathsf T} \bm{B} \bm{y} - \delta$ is linear w.r.t. $\delta$. Moreover, since $G_1(\bm{x}) = \bm{x}^{\mathsf T} \bm{B} \bm{x}$ and is globally lower-bounded by $\bar{G}_1(\bm{x}|\bm{x}^p)$, the constraint in \eqref{loglog} can thus be converted into the following convex constraint:
	\begin{align}\label{36}
		\frac{\left(\bm{y}^{\mathsf T}\bm{B}\bm{x}\right)^2}{\bm{y}^{\mathsf T}\bm{B}\bm{y}-\delta} \leq \bar{G}_1(\bm{x}|\bm{x}^p).
	\end{align}
	
	Thus, at the $p$th iteration, $\bm{x}$ can be optimized by solving the following problem:
	\begin{subequations}
		\begin{align}
			\textrm {(P3)}~~\max_{\bm{x},\bm{V},\delta} \quad & \delta \label{P4a}\\
			\text{s.t.} \quad & \eqref{30}, \eqref{36}, \eqref{P2d}, \eqref{P2e}, \eqref{P2f}. \notag
		\end{align}
	\end{subequations}
	Problem (P3) is a convex optimization problem because all the constraints are convex w.r.t. $\bm{x}$, $\bm{V}$, and $\delta$, allowing it to be efficiently solved using standard convex optimization tools such as CVX \cite{grantcvx}.

	\subsection{Optimization of $\bm{y}$ with Given $\bm{x}$}
	In this subsection, we optimize $\bm{y}$ and $\bm{V}$ in problem (P2) with fixed $\bm{x}$, where the non-convex constraints in \eqref{xBx} and \eqref{yBy} are converted into convex constraints using the SCA approach. Let $\bm{y}^q$ represent the $y$-axis APV at the $q$th SCA iteration. Since the structure of the optimization subproblem of $\bm{y}$ has a similar structure as that of $\bm{x}$, problem (P3) can be modified by substituting $\{\bm{x}, \bm{x}^p, \bm{y}\}$ with $\{\bm{y}, \bm{y}^q, \bm{x}\}$. Accordingly, at the $q$th iteration, the optimization of $\bm{y}$ can be formulated as the following problem:
	\begin{subequations}
		\begin{align}
			\textrm {(P4)}~~\max_{\bm{y},\bm{V},\delta} \quad & \delta \label{P5a}\\
			\text{s.t.} \quad & \bar{G}_1(\bm{y}|\bm{y}^q) - G_2(\bm{y}, \bm{x}) \geq \delta, \label{P5b}\\
			& \frac{\left(\bm{x}^{\mathsf T}\bm{B}\bm{y}\right)^2}{\bm{x}^{\mathsf T}\bm{B}\bm{x}-\delta} \leq \bar{G}_1(\bm{y}|\bm{y}^q), \label{P5c}\\
			& \eqref{P2d}, \eqref{P2e}, \eqref{P2f}, \notag
		\end{align}
	\end{subequations}
	which is a convex optimization problem because all the constraints are convex w.r.t. $\bm{y}$, $\bm{V}$, and $\delta$, allowing it to be efficiently solved using standard convex optimization tools such as CVX \cite{grantcvx}.
	
	\subsection{Overall Algorithm}
	Based on the solutions for problems (P3) and (P4), we present the complete alternating optimization for solving problem (P2) in Algorithm~\ref{alg1}. Specifically, in lines 4–7, $\bm{x}$ is updated iteratively with $\bm{y}$ fixed by solving problem (P3) until the improvement of $\delta$ in \eqref{P4a} falls below the threshold $\epsilon_{\bm{x}}$. Next, in lines 9–12, $\bm{y}$ is updated iteratively with $\bm{x}$ fixed by solving problem (P4) until the improvement of $\delta$ in \eqref{P5a} falls below the threshold $\epsilon_{\bm{y}}$. This alternating optimization continues until the overall increase in $\delta$ falls below a threshold $\epsilon$.
	
	\begin{algorithm}[!t]
		\caption{Alternating Optimization for Problem (P2)}
		\label{alg1}
		\begin{algorithmic}[1]
			\STATE \textbf{Input:} $N$, $\epsilon$, $\epsilon_{\bm{x}}$, $\epsilon_{\bm{y}}$, $\bm{x}^0$, $\bm{y}^0$, $\mathcal{C}$.
			\STATE \textbf{Initialize:} $p \gets 0$, $q \gets 0$, $\bm{x} \gets \bm{x}^0$, $\bm{y} \gets \bm{y}^0$, $\delta \gets 0$.
			
			\WHILE{The increment of $\delta$ is greater than $\epsilon$}
			
			\WHILE{The increment of $\delta$ in \eqref{P4a} is greater than $\epsilon_{\bm{x}}$}
			\STATE Solve problem (P3) to update $\bm{x}^{p+1}$.
			\STATE $p \gets p + 1$.
			\ENDWHILE
			\STATE Update $\bm{x} \gets \bm{x}^p$ and reset $p \gets 0$.
			
			\WHILE{The increment of $\delta$ in \eqref{P5a} is greater than $\epsilon_{\bm{y}}$}
			\STATE Solve problem (P4) to update $\bm{y}^{q+1}$.
			\STATE $q \gets q + 1$.
			\ENDWHILE
			\STATE Update $\bm{y} \gets \bm{y}^q$ and reset $q \gets 0$.
			
			\STATE Update $\delta \gets \min \Big[ {\rm{var}}(\bm{x}) - \frac{{\rm{cov}}(\bm{x},\bm{y})^2}{{\rm{var}}(\bm{y})}, \ {\rm{var}}(\bm{y}) - \frac{{\rm{cov}}(\bm{x},\bm{y})^2}{{\rm{var}}(\bm{x})} \Big]$.
			
			\ENDWHILE
			
			\STATE \textbf{Output:} $\bm{x}$, $\bm{y}$, $\bm{V}$.
		\end{algorithmic}
	\end{algorithm}

	Since the objective function is non-decreasing over iterations and has an upper-bound, the convergence of Algorithm~\ref{alg1} is guaranteed. The computational complexity of solving convex subproblem (P3) or (P4) using the interior-point method is $\mathcal{O}(N^{3.5}\ln(1/\kappa))$, where $\kappa$ denotes the accuracy of interior-point method. Let $I$, $I_{\bm{x}}$, and $I_{\bm{y}}$ represent the maximum number of iterations for the outer alternating loop (lines 4–14), the inner loop for updating $\bm{x}$ (lines 5–6), and the inner loop for updating $\bm{y}$ (lines 10–11), respectively. Consequently, the overall computational complexity of Algorithm~\ref{alg1} can be expressed as $\mathcal{O}\big(N^{3.5}\ln(1/\kappa)(I_{\bm{x}}+I_{\bm{y}})I\big)$.

	\section{Numerical Results}
	This section provides numerical results to assess the performance of the proposed methods for optimizing MA's 1D and 2D trajectories for target AoA estimation. The convergence thresholds in Algorithm~\ref{alg1} are set as $\epsilon=10^{-4}$ and $\epsilon_{\bm{x}}=\epsilon_{\bm{y}}=10^{-2}$. The average received SNR is defined as $P|\beta|^2/\sigma^2$. We set $T_s=10$ µs, $\lambda=0.05$ m, $v^{\rm m} = 10$ m/s, and thus $\Delta=v^{\rm m}T_s=2\times 10^{-3}\lambda$. 
	
	\subsection{1D Antenna Movement}
	In this subsection, we consider the 1D line segment with length $A$ for antenna movement. We set $\theta=45^{\circ}$, and thus $u=\cos \theta=0.71$. We adopt the MLE algorithm to estimate the AoA via \eqref{MLE1D} and compute the corresponding CRB of AoA estimation MSE via \eqref{CRB1D}.
	
	\begin{figure}[!t]
		\centering
		\includegraphics[width=75mm]{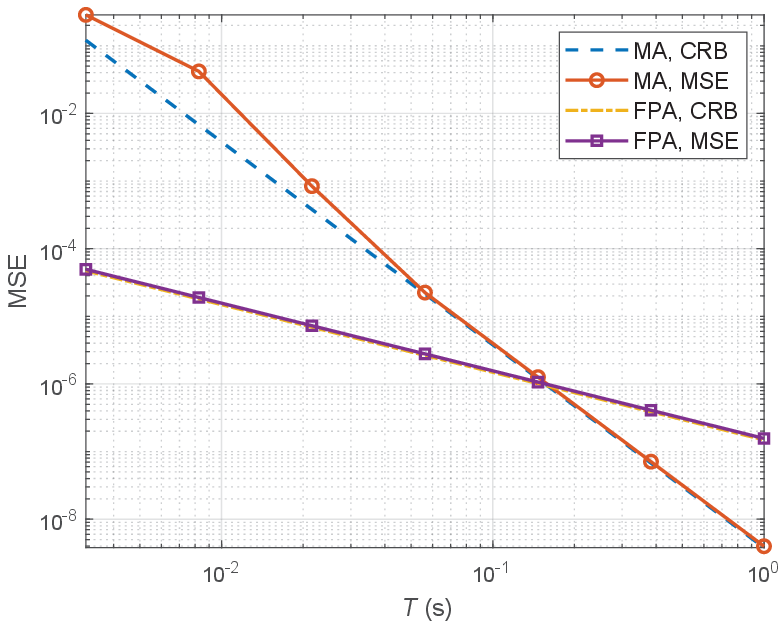}
		\caption{MSE versus sensing time for the single MA-aided and FPA array-aided sensing systems in the case of 1D antenna movement.}
		\label{fig_1d_vs_fpa}
	\end{figure}
	
	First, we consider the case of $A \geq (N-1)\Delta$ (i.e., the TC regime) and compare the proposed single MA-aided sensing system with the conventional FPA array-aided system. The receive SNR is set to $P|\beta|^2/\sigma^2=-20$ dB, and the FPA receiver is equipped with a ULA of $M=16$ antennas. Fig.~\ref{fig_1d_vs_fpa} compares the AoA estimation MSEs (including both the actual MSE and its CRB) versus sensing time for both systems. The results show that the single MA achieves a CRB scaling order of $\mathcal{O}(T^{-3})$, which is substantially faster than the $\mathcal{O}(T^{-1})$ scaling order of the FPA array. Furthermore, the proposed MA scheme outperforms the FPA scheme when $T > \frac{M^{3/2}}{2}\frac{\lambda}{v^{\rm m}} = 0.16$ s, which is easily attainable in practical scenarios.

	\begin{figure}[!t]
		\centering
		\includegraphics[width=75mm]{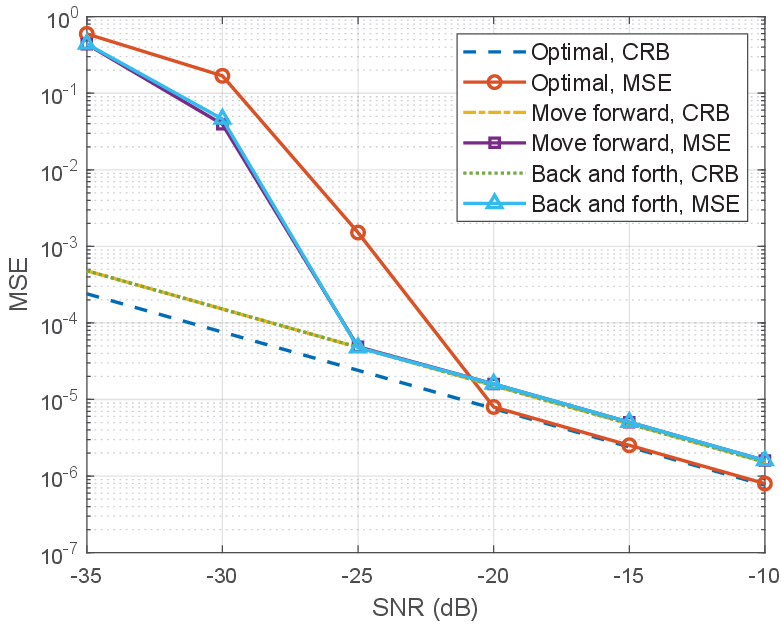}
		\caption{MSE versus SNR for different MA trajectories in the case of 1D antenna movement.}
		\label{1D_SNR}
	\end{figure}
	
	\begin{figure}[!t]
		\centering
		\includegraphics[width=75mm]{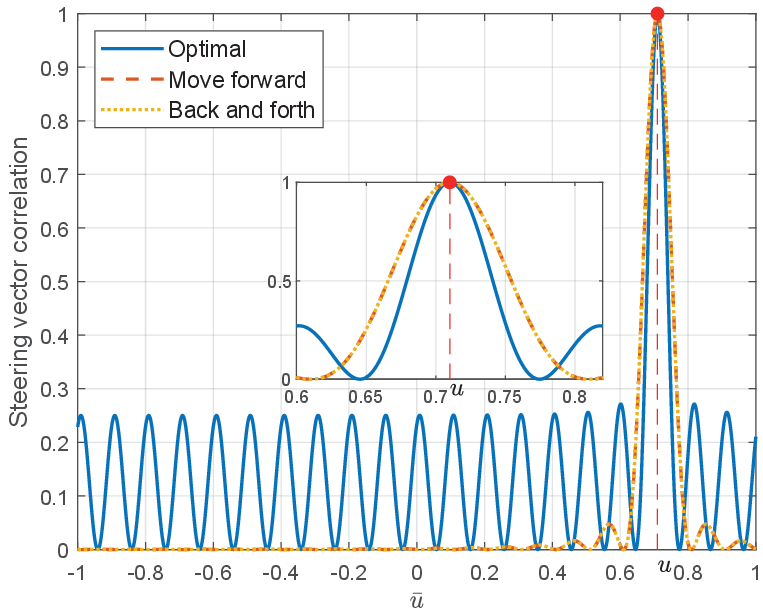}
		\caption{Comparison of steering vector correlation with different MA trajectories in the case of 1D antenna movement.}
		\label{1D_pattern}
	\end{figure}

	Then, we compare the proposed MA trajectory with different benchmark schemes. The considered benchmark schemes for MA trajectory design are listed as follows: 1) \textbf{Move forward}: the MA moves from $x=0$ to $x=A$ at a constant speed of $A/(T_sN)$; and 2) \textbf{Back and forth}: the MA moves back and forth between $x=0$ and $x=A$ at a speed of $v^{\rm m}$. We set $A=10\lambda$, $T=0.1$ s, and thus $N=10^4$. Fig.~\ref{1D_SNR} shows the AoA estimation MSEs versus SNR for different schemes. The results show that the MLE-based AoA estimation curves closely follow the CRB for all schemes. Moreover, the proposed optimal MA trajectory in \eqref{xRL3_main} achieves a substantially lower MSE compared to the benchmark schemes. For example, at $\text{SNR}=-15$ dB, it yields a $50\%$ reduction in MSE compared to both benchmark schemes.
	
	To gain more insights, Fig.~\ref{1D_pattern} illustrates the steering vector correlation for each scheme, defined as $q(\bar{u}|u)\coloneqq\frac{1}{N^2}|\bm{\alpha}(\bm{x},u)^{\mathsf H}\bm{\alpha}(\bm{x},\bar{u})|^2$, with $\bar{u}\in[-1,1]$ and $u=0.71$. Ideally, to maximize spatial resolution and suppress ambiguity in AoA estimation, $q(\bar{u}|u)$ should approach an impulse function, i.e., $q(\bar{u}|u)\to 1$ when $\bar{u}=u$ and $q(\bar{u}|u)\to 0$ otherwise. As shown in Fig.~\ref{1D_pattern}, the proposed MA trajectory generates a much narrower main lobe than the benchmark schemes, thereby providing higher spatial resolution and reduced estimation error. Moreover, the two benchmark schemes exhibit nearly identical correlation patterns, which results in the similar MSE performance as shown in Fig.~\ref{1D_SNR}.

	\subsection{2D Antenna Movement}

		\begin{figure}[!t]
		\centering
		\subfigure[${\rm{MSE}}(u)$]{
			\begin{minipage}{.47\textwidth}
				\centering
				\includegraphics[scale=.54]{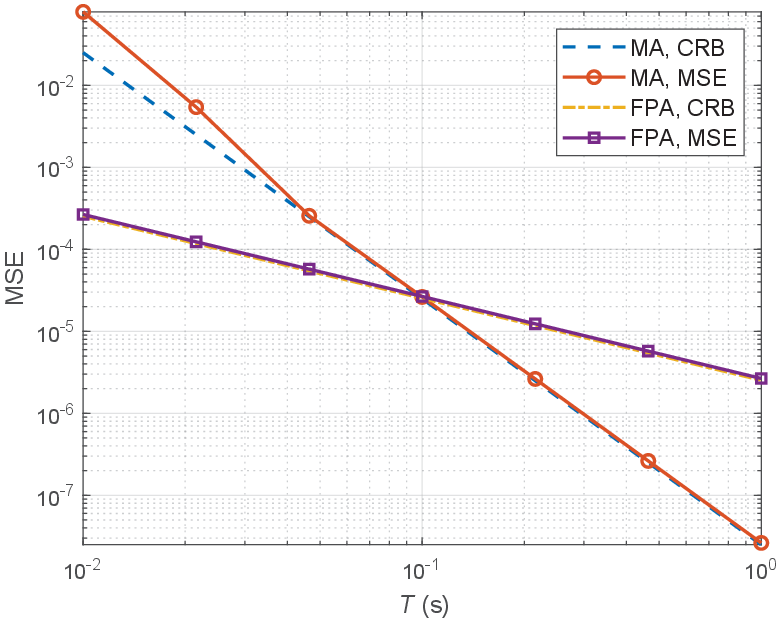}
			\end{minipage}
			\label{2D_versus_FPA_u}
		}
		\subfigure[${\rm{MSE}}(v)$]{
			\begin{minipage}{.47\textwidth}
				\centering
				\includegraphics[scale=.54]{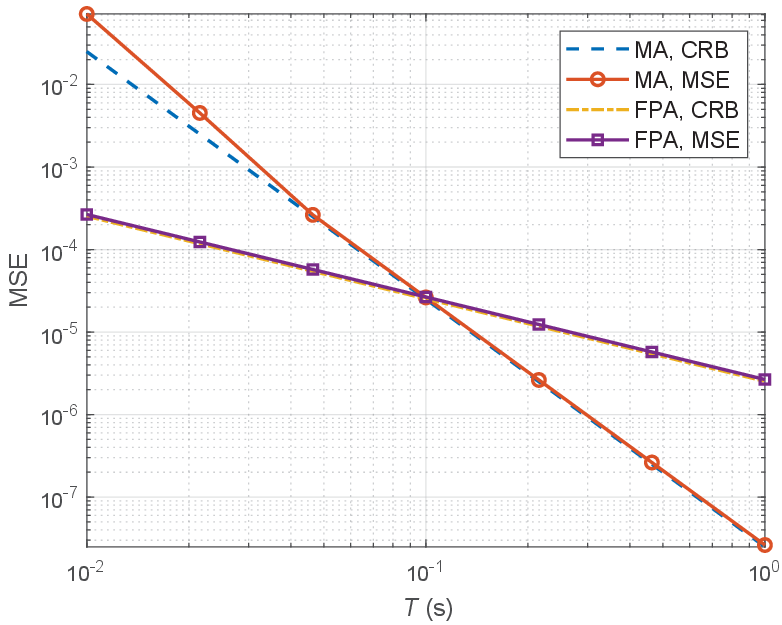}
			\end{minipage}
			\label{2D_versus_FPA_v}
		}
		\caption{MSE versus sensing time for the single MA-aided and FPA array-aided sensing systems in the case of 2D antenna movement.}
		\label{FIG_2D_versus_FPA}
	\end{figure}
	
	\begin{figure}[!t]
		\centering
		\subfigure[Without the region size constraint]{
			\begin{minipage}{.47\textwidth}
				\centering
				\includegraphics[scale=.54]{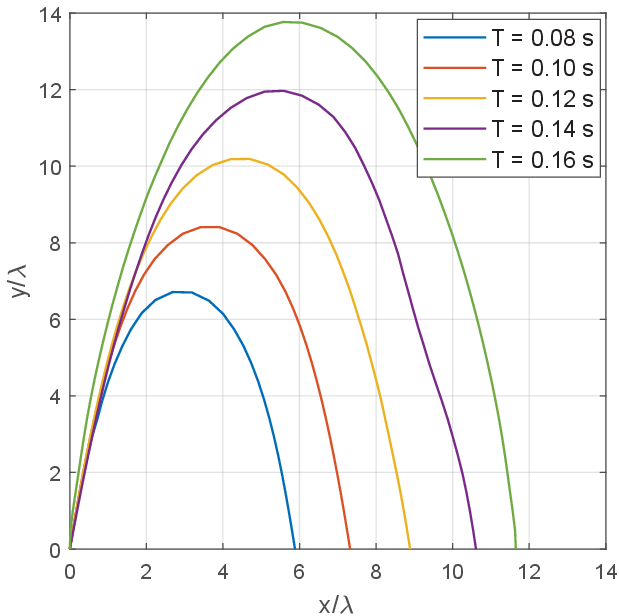}
			\end{minipage}
			\label{2D_position_36}
		}
		\subfigure[With the region size constraint]{
			\begin{minipage}{.47\textwidth}
				\centering
				\includegraphics[scale=.54]{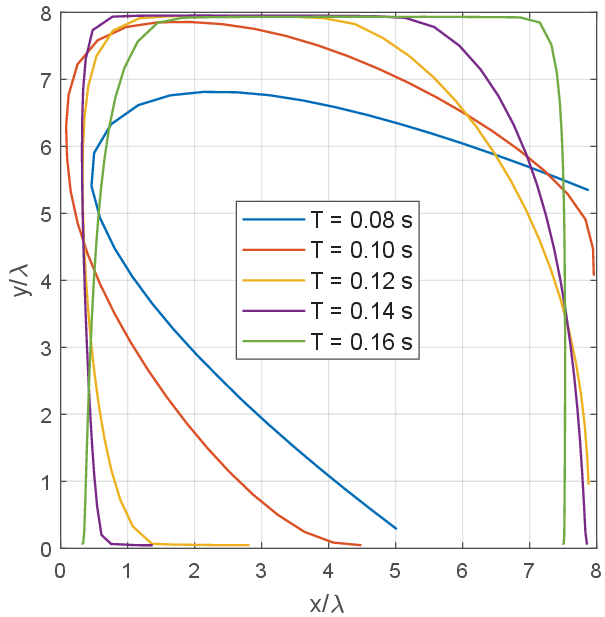}
			\end{minipage}
			\label{2D_position_100}
		}
		\caption{Illustration of the MA trajectories for 2D antenna movement.}
		\label{FIG4}
	\end{figure}
	
	\begin{figure}[!t]
		\centering
		\subfigure[${\rm{MSE}}(u)$]{
			\begin{minipage}{.47\textwidth}
				\centering
				\includegraphics[scale=.54]{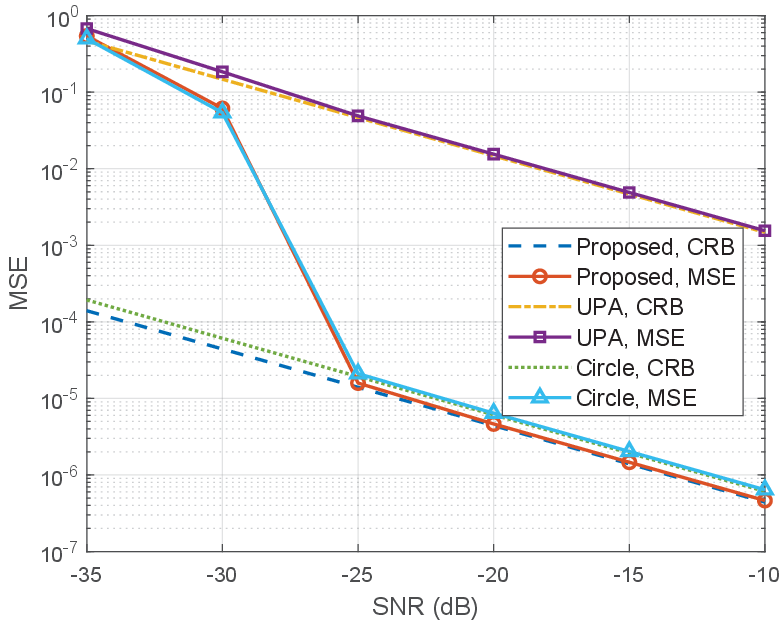}
			\end{minipage}
			\label{2D_SNR_u}
		}
		\subfigure[${\rm{MSE}}(v)$]{
			\begin{minipage}{.47\textwidth}
				\centering
				\includegraphics[scale=.54]{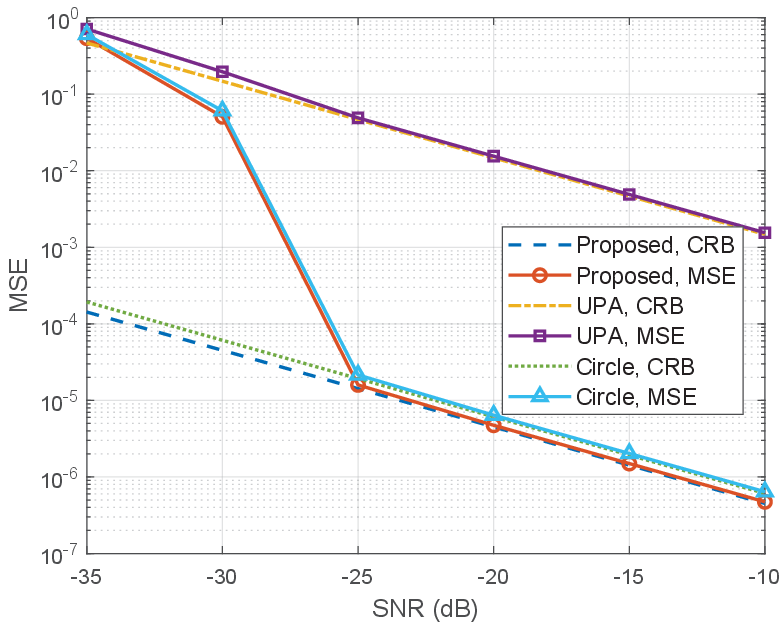}
			\end{minipage}
			\label{2D_SNR_v}
		}
		\caption{MSE versus SNR for different MA trajectories in the case of 2D antenna movement.}
		\label{FIG5}
	\end{figure}
	
	\begin{figure}[!t]
		\centering
		\subfigure[Proposed]{
			\begin{minipage}{.47\textwidth}
				\centering
				\includegraphics[scale=.54]{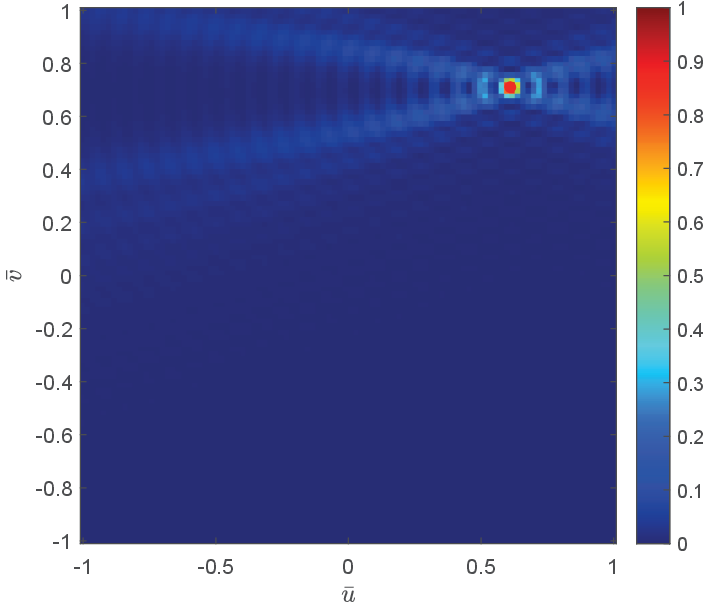}
			\end{minipage}
			\label{2D_correlation_proposed}
		}
		\subfigure[UPA]{
			\begin{minipage}{.47\textwidth}
				\centering
				\includegraphics[scale=.54]{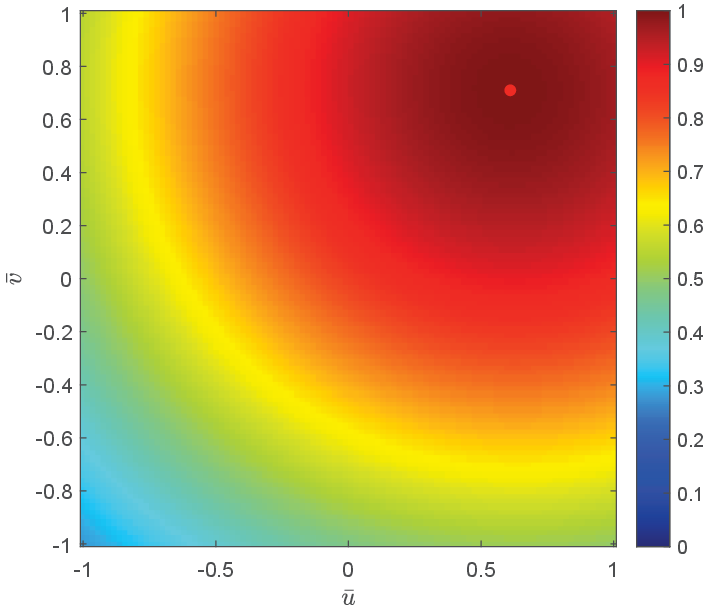}
			\end{minipage}
			\label{2D_correlation_UPA}
		}
		\subfigure[Circle]{
			\begin{minipage}{.47\textwidth}
				\centering
				\includegraphics[scale=.54]{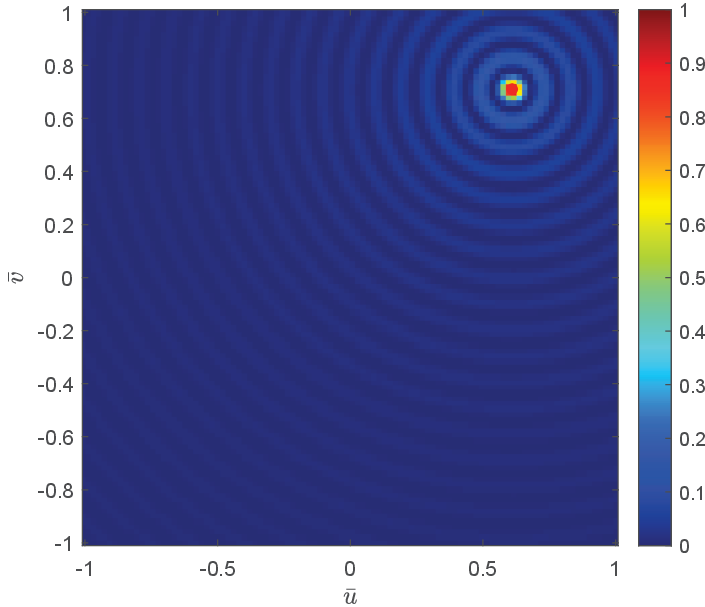}
			\end{minipage}
			\label{2D_correlation_circle}
		}
		\caption{Comparison of steering vector correlation with different MA trajectories in the case of 2D antenna movement.}
		\label{FIG6}
	\end{figure}

	In this subsection, numerical results are provided to verify the performance of Algorithm~\ref{alg1} for AoA estimation in the case of 2D antenna movement. We consider the 2D square region $\mathcal{C}^\textrm{squ}$ with size $A\times A$. We set $\theta=45^{\circ}$ and $\phi=30^{\circ}$, thus $u = \sin\theta \cos\phi=0.61$ and $v = \cos\theta=0.71$. We adopt the MLE algorithm to estimate the spatial AoAs via \eqref{MLE2D} and compute the corresponding CRB of AoA estimation MSE via \eqref{CRBr}.
	
	First, we consider the MA moves at the maximum speed $v^{\rm m}$ along a circular trajectory with radius $R^{\rm cir} = \frac{\Delta}{2\sin(\tfrac{\pi}{N})}$, and compare the proposed single MA-aided sensing system with the conventional 2D FPA array-aided system. The receive SNR is set to $P|\beta|^2/\sigma^2=-20$ dB, and the FPA receiver is equipped with a UPA of $M=4\times 4$ antennas. Fig.~\ref{FIG_2D_versus_FPA} compares the AoA estimation MSEs versus sensing time for both systems. The results show that the single MA achieves a CRB scaling order of $\mathcal{O}(T^{-3})$, which is substantially faster than the $\mathcal{O}(T^{-1})$ scaling order of the FPA array. Furthermore, the proposed MA scheme outperforms the FPA scheme when $T > \frac{\pi M}{\sqrt{6}} \frac{\lambda}{v^{\rm m}} = 0.1$ s, which is even smaller than its previous counterpart in the case of 1D antenna movement.

	Next, we compare the proposed MA trajectory with different benchmark schemes. The considered benchmark schemes for MA trajectory design are listed as follows: 1) \textbf{UPA}: the MA moves at speed $v^{\rm m}$ to form a $\sqrt{N}\times \sqrt{N}$ UPA, with inter-position spacing $\Delta$ in both the horizontal and vertical directions; and 2) \textbf{Circle}: the MA moves at speed $v^{\rm m}$ to form a circular trajectory with radius $R^{\rm cir} = \tfrac{\Delta}{2\sin(\tfrac{\pi}{N})}$.	In Fig.~\ref{FIG4}, we illustrate the MA trajectories for the proposed Algorithm~\ref{alg1}. To facilitate the optimization, every $250$ adjacent elements in the AVM $\bm{V}$ share the same value. In Fig.~\ref{2D_position_36}, we set $\bm{r}_1=[0,0]^{\mathsf T}$ and assume that the antenna movement region is sufficiently large, such that the constraint \eqref{P2e} is inactive. By contrast, in Fig.~\ref{2D_position_100}, we consider a finite region size with $A=8\lambda$. It can be observed from Fig.~\ref{FIG4} that the MA’s trajectories are significantly affected by the region size constraint. Specifically, Fig.~\ref{2D_position_36} demonstrates that without the constraint \eqref{P2e}, the trajectories expand widely in the $x$-axis and $y$-axis directions as $T$ increases, thereby forming a large virtual aperture via antenna movement and enhancing angular resolution for improved AoA estimation. While Fig.~\ref{2D_position_100} shows that under the region size constraint, the trajectories are confined within the predefined region, compelling the MAs to move along the boundaries when $T$ is sufficiently large.
	
	Fig.~\ref{FIG5} shows the AoA estimation MSEs versus SNR for different schemes. We set $A=15\lambda$ and $T=0.16$ s. The results show that the MLE-based AoA estimation curves closely follow the CRB for all schemes, and the proposed MA trajectory achieves a substantially lower MSE compared to the benchmark schemes. Moreover, the proposed scheme yields similar performance of ${\rm{MSE}}(u)$ and ${\rm{MSE}}(v)$, demonstrating its effectiveness in jointly estimating $u$ and $v$.
	
	To gain more insights, Fig.~\ref{FIG6} illustrates the steering vector correlation for each scheme, defined as $\frac{1}{N^2}|\bm{\alpha}(\bm{R},\bm{\eta})^{\mathsf H} \bm{\alpha}(\bm{R},\bar{\bm{\eta}})|^2$, with $\bar{\bm{\eta}}\in[-1,1]\times[-1,1]$ and $\bm{\eta}=[0.61,0.71]^{\mathsf T}$. We set $A=15\lambda$ and $T=0.16$ s. As shown in Fig.~\ref{2D_position_36}, the proposed MA trajectory forms a virtual aperture of $12\lambda \times 14\lambda$. In contrast, the circular trajectory with radius $R^{\rm cir} = \tfrac{\Delta}{2\sin(\tfrac{\pi}{N})}=5\lambda$ forms a virtual aperture of $2R^{\rm cir} \times 2R^{\rm cir} = 10\lambda \times 10\lambda$, whereas the UPA trajectory only achieves a virtual aperture of $\sqrt{N}\Delta \times \sqrt{N}\Delta = 0.25\lambda \times 0.25\lambda$, which is much smaller than both the proposed and circular trajectories. Consequently, as shown in Fig.~\ref{FIG6}, the proposed MA trajectory generates a much narrower main lobe than the benchmark schemes, thereby providing higher spatial resolution and reduced AoA estimation error.

	\section{Conclusion}
	In this paper, we proposed a new wireless sensing system employing a single MA to improve the AoA estimation accuracy by optimizing the antenna's trajectory over time. We showed that the angle estimation performance is fundamentally determined by the MA trajectory, and derived the CRB of the MSE for AoA estimation as a function of the trajectory for both 1D and 2D antenna movement. For the 1D case, a globally optimal trajectory for minimizing the CRB was obtained in closed form. Notably, the resulting CRB decreases cubically with sensing time in the TC regime, whereas it decreases linearly with sensing time and quadratically with the movement line segment's length in the SC regime. For the 2D case, we aimed to achieve the min-max CRBs of estimation MSE for the two AoAs w.r.t. the horizontal and vertical axes. To this end, we designed an efficient alternating optimization algorithm that iteratively updates the MA’s horizontal or vertical coordinates with the other being fixed, yielding a locally optimal trajectory. Numerical results showed that the proposed 1D/2D MA-based sensing schemes significantly reduce both the CRB and actual AoA estimation MSE compared to conventional FPA-based sensing with ULAs/UPAs as well as various benchmark MA trajectories. Moreover, it was revealed that the steering vectors of our designed 1D/2D MA trajectories have low correlation in the angular domain, thereby effectively increasing the angular resolution for achieving higher AoA estimation accuracy.

	\begin{figure}[!t]
		\centering
		\includegraphics[width=85mm]{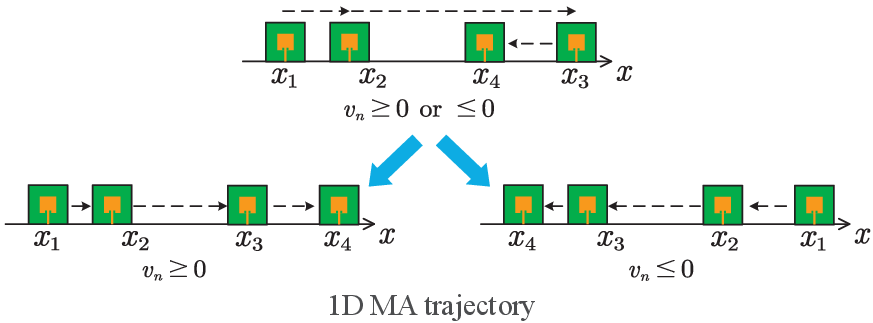}
		\caption{Illustration of the equivalent trajectory with $v_n \geq0$ or  $v_n \leq0$.}
		\label{trajectory_equivalence}
	\end{figure}
	
	\appendix
	\subsection{Proof of Theorem 1}
	As illustrated in Fig.~\ref{trajectory_equivalence}, for any 1D MA trajectory with velocity satisfying $-v^{\rm m} \le v_n \le v^{\rm m}$, $n=1, \dots, N-1$, an equivalent trajectory can always be constructed that achieves the same APV while restricting the velocity to $0 \le v_n \le v^{\rm m}$ or $-v^{\rm m} \le v_n \le 0$. Therefore, without loss of generality, we assume $v_n \geq 0$ and $x_1 \leq x_2 \leq \cdots \leq x_N$. The counterpart for $v_n \leq 0$ can be obtained by mirroring the APV and reversing the velocity directions. Specifically, for $(\bm{x},\bm{v})$ with $v_n \geq 0$, the mirrored solution $(\tilde{\bm{x}},\tilde{\bm{v}})$ is given by
	\begin{align}
		\tilde{x}_n &= A - x_{N+1-n}, \quad n=1,\ldots,N, \notag\\
		\tilde{v}_n &= -v_{N-n}, \quad n=1,\ldots,N-1.
	\end{align}
	Then, from \eqref{P1d}, it follows that $x_n - x_{n-1} \leq \Delta$, $n=2, \dots, N$. The objective function of problem (P1) can then be expressed as
	\begin{align}
		\text{var}(\bm{x}) &= \frac{1}{N^2} \sum_{n=2}^{N} \sum_{m=1}^{n-1} (x_n - x_m)^2 \notag\\
		&\leq \frac{1}{N^2} \sum_{n=2}^{N} \sum_{m=1}^{n-1} (n-m)^2 \Delta^2 = \frac{(N^2-1)\Delta^2}{12},
	\end{align}
	where the upper-bound is achieved when $x_n - x_{n-1} = \Delta$, $n=2, \dots, N$.
	
	Therefore, if $A \geq (N-1)\Delta$, such that $x_n - x_{n-1} = \Delta$ for all $n=2, \dots, N$, the optimal AVV $\bm{v}^{\rm TC}$ and APV $\bm{x}^{\rm TC}$ for problem (P1) are given by \eqref{vTC} and \eqref{xTC}, respectively,
	where $x_1^{\rm TC}$ satisfies $0 \leq x_1^{\rm TC} \leq A - (N-1)\Delta$
	to ensure $0 \leq x_1^{\rm TC} \leq x_2^{\rm TC} \leq \cdots \leq x_N^{\rm TC} \leq A$. The corresponding minimum CRB is then given by
	\begin{align}\label{crbinf}
		{\rm CRB}_u\left(\bm{x}^{\rm TC}\right) = \frac{3\sigma^2\lambda^2}{2\pi^2P|\beta|^2T_s^2}  \frac{1}{N(N^2-1)(v^{\rm m})^2}.
	\end{align}
	It shows that ${\rm CRB}_u\left(\bm{x}^{\rm TC}\right)$ decreases with $N$ in the order of $\mathcal{O}(N^{-3})$, while it decreases with $v^{\rm m}$ in the order of $\mathcal{O}((v^{\rm m})^{-2})$. This completes the proof of Theorem~1.
	
	\subsection{Proof of Theorem 2}
	Similar to the proof of Theorem~1, we assume $v_n \geq 0$ and $x_1 \leq x_2 \leq \cdots \leq x_N$ without loss of generality. The counterpart for $v_n \leq 0$ can be obtained by mirroring the APV and reversing the velocity directions. Define $\bm{\bar{x}}_n \coloneqq [x_1,\ldots,x_{n-1},x_{n+1},\ldots,x_N]^{\mathsf T} \in \mathbb{C}^{(N-1)\times1}$, $n=1,\dots, N$. The objective function of problem (P1) can then be expressed as a function of $x_n$:
	\begin{align}\label{var_xn}
		\text{var}(x_n|\bm{\bar{x}}_n) = \frac{N-1}{N^2}\left( \left( x_n - \mu(\bm{\bar{x}}_n) \right)^2 + N\text{var}(\bm{\bar{x}}_n) \right).
	\end{align}
	From \eqref{var_xn}, it is shown that $\text{var}(x_n|\bm{\bar{x}}_n)$ is a convex quadratic function of $x_n$, with the feasible region given by $x_n\in [x_{n,{\rm L}}, x_{n,{\rm R}}]	\coloneqq\left[\max\left( x_{n+1}-\Delta,x_{n-1} \right), \min\left( x_{n-1}+\Delta,x_{n+1} \right) \right]$.	This implies that, once $\bm{\bar{x}}_n$ is fixed, the maximum of $\text{var}(x_n|\bm{\bar{x}}_n)$ must be attained at the boundary of the feasible region of $x_n$, i.e., $x_n = x_{n,{\rm L}}$ or $x_n = x_{n,{\rm R}}$.
	
	Next, we show that if there exists $\tilde{n} \in \{1, \ldots, N\}$ such that $\arg \max_{x_{\tilde{n}}} \text{var}(x_{\tilde{n}}|\bm{\bar{x}}_{\tilde{n}}) = x_{\tilde{n},{\rm R}}$, then $\arg \max_{x_n} \text{var}(x_n|\bm{\bar{x}}_n) \equiv x_{n,{\rm R}}$, $n=\tilde{n}+1,\ldots,N$. 
	Specifically, from \eqref{var_xn}, $\text{var}(x_{\tilde{n}}|\bm{\bar{x}}_{\tilde{n}})$ is a convex quadratic function of $x_{\tilde{n}} \in [x_{\tilde{n},{\rm L}}, x_{\tilde{n},{\rm R}}]$ with axis of symmetry at $x = \mu(\bm{\bar{x}}_{\tilde{n}})$. Since $\text{var}(x_{\tilde{n},{\rm R}}|\bm{\bar{x}}_{\tilde{n}}) \geq \text{var}(x_{\tilde{n},{\rm L}}|\bm{\bar{x}}_{\tilde{n}})$, it follows that
	\begin{align}\label{xxmu}
		x_{\tilde{n},{\rm R}} + x_{\tilde{n},{\rm L}} \geq 2\mu(\bm{\bar{x}}_{\tilde{n}}).
	\end{align}
	We first prove that $\mu(\bm{\bar{x}}_n)$ is a non-increasing function of $x_n$ for $n = 1, \ldots, N$. Specifically, the difference between $\mu(\bm{\bar{x}}_{n+1})$ and $\mu(\bm{\bar{x}}_n)$ ($n=1, \ldots, N-1$) is $\mu(\bm{\bar{x}}_{n+1}) - \mu(\bm{\bar{x}}_n) = \frac{1}{N-1} (x_n - x_{n+1})$.
	Since $x_n \le x_{n+1}$, it follows that $\mu(\bm{\bar{x}}_{n+1}) - \mu(\bm{\bar{x}}_n) \le 0$, showing that $\mu(\bm{\bar{x}}_n)$ is a non-increasing function of $x_n$ for $n = 1, \ldots, N$. On the other hand, since $x_n \ge x_{n-1}$ and $x_{n+2} \ge x_{n+1}$, we have $x_n + \Delta \ge x_{n-1} + \Delta$ and $x_{n+2} \ge x_{n+1}$. Since the $\min(\cdot)$ function is non-decreasing w.r.t. its arguments, it follows that $\min(x_n + \Delta, x_{n+2}) \ge \min(x_{n-1} + \Delta, x_{n+1})$,
	which implies $x_{n+1,{\rm R}} \ge x_{n,{\rm R}}$. Similarly, we have $x_{n+1,{\rm L}} \ge x_{n,{\rm L}}$. Since $x_{n,{\rm R}}$ and $x_{n,{\rm L}}$ are non-decreasing while $\mu(\bm{\bar{x}}_n)$ is non-increasing of $x_n$ for $n = 1, \ldots, N$, inequality \eqref{xxmu} always holds. Hence, we have $\arg \max_{x_n} \text{var}(x_n|\bm{\bar{x}}_n) \equiv x_{n,{\rm R}}$, $n=\tilde{n},\ldots,N$.
	Consequently, we have $	x_n = x_{n-1} + \Delta$ or $x_n = x_{n+1}$, $n = \tilde{n}, \ldots, N$.
	If there exists $N-N_{\rm R}+1 \in \{\tilde{n}, \ldots, N\}$ such that $x_n = x_{n+1}$, then, since $x_{n+1} \neq x_n + \Delta$, it follows that $x_{n+1} = x_{n+2}$. By induction, we have $x_{N-N_{\rm R}+1} = x_{N-N_{\rm R}+2} = \cdots = x_N$.
	Consequently, $x_n$ ($n = \tilde{n}, \ldots, N$) is given by
	\begin{align}\label{xR}
		x_n= \left\{\begin{array}{l}
			x_{\tilde{n}} + (n-{\tilde{n}})\Delta,\ n=\tilde{n},\ldots,N-N_{\rm R}, \\
			x_{\rm max},\ n=N-N_{\rm R}+1,\ldots,N, \end{array}\right.
	\end{align}
	where $x_{\rm max} = \max\limits_{n=\tilde{n},\ldots,N}  x_n$.
	
	Since $\text{var}(x_n|\bm{\bar{x}}_n)$ is symmetric w.r.t. $x = \mu(\bm{\bar{x}}_n)$, it follows analogously that if there exists $\hat{n} \in \{1, \ldots, N\}$ such that $\arg \max_{x_{\hat{n}}} \text{var}(x_{\hat{n}}|\bm{\bar{x}}_{\hat{n}}) = x_{\hat{n},{\rm L}}$,
	then, we have $\arg \max_{x_n} \text{var}(x_n|\bm{\bar{x}}_n) \equiv x_{n,{\rm L}}$, $n = 1, \ldots, \hat{n}$.
	Hence, we have $x_n = x_{n+1} - \Delta$ or $x_n = x_{n-1}$, $n = 1, \ldots, \hat{n}$.
	If there exists $N_{\rm L} \in \{1, \ldots, \hat{n}\}$ such that $x_n = x_{n-1}$, then, since $x_{n-1} \neq x_n - \Delta$, it follows that $x_{n-1} = x_{n-2}$. By induction, we obtain $x_1 = x_2 = \cdots = x_{N_{\rm L}}$.
	Consequently, $x_n$ ($n = 1, \ldots, \hat{n}$) is given by
	\begin{align}\label{xL}
		x_n= \left\{\begin{array}{l}
			x_{\rm min},\ n=1,\ldots,N_{\rm L}, \\
			x_{\hat{n}} + (n-{\hat{n}})\Delta,\ n=N_{\rm L}+1,\ldots,\hat{n}, \end{array}\right.
	\end{align}
	where $x_{\rm min} = \min\limits_{n=1,\ldots,\hat{n}}  x_n$.
	
	Therefore, by combining the results in \eqref{xL} and \eqref{xR}, we have $\tilde{n} = \hat{n}+1$, and the optimal 1D MA trajectory for problem (P1) can be expressed as
	\begin{align}
		x_n= \left\{\begin{array}{l}
			x_{\rm min},\ n=1,\ldots,N_{\rm L}, \\
			x_{\hat{n}} + (n-{\hat{n}})\Delta,\ n=N_{\rm L}+1,\ldots,\hat{n}, \\
			x_{\hat{n}+1} + (n-\hat{n}-1)\Delta,\ n=\hat{n}+1,\ldots,N-N_{\rm R}, \\
			x_{\rm max},\ n=N-N_{\rm R}+1,\ldots,N. \end{array}\right. \notag
	\end{align}
	
	Next, we prove by contradiction that $x_{\hat{n}+1} - x_{\hat{n}} = \Delta$. Specifically, suppose $\bm{x}$ is the optimal solution for maximizing $\text{var}(\bm{x})$ with $x_{\hat{n}+1} - x_{\hat{n}} < \Delta$. If $x_{\hat{n}+1} > \mu(\bm{\bar{x}}_{\hat{n}+1})$, then, according to \eqref{var_xn}, setting $x_{\hat{n}+1} \leftarrow x_{\hat{n}} + \Delta$ increases $\left( x_{\hat{n}+1} - \mu(\bm{\bar{x}}_{\hat{n}+1}) \right)^2$, thereby increasing $\text{var}(x_{\hat{n}+1}|\bm{\bar{x}}_{\hat{n}+1})$. Otherwise, if $x_{\hat{n}+1} \le \mu(\bm{\bar{x}}_{\hat{n}+1})$, setting $x_{\hat{n}} \leftarrow x_{\hat{n}+1} - \Delta$ increases $\left( x_{\hat{n}} - \mu(\bm{\bar{x}}_{\hat{n}}) \right)^2$, thereby increasing $\text{var}(x_{\hat{n}}|\bm{\bar{x}}_{\hat{n}})$. In either case, $\text{var}(\bm{x})$ is increased, contradicting the assumption that $\bm{x}$ is optimal. Therefore, we must have $x_{\hat{n}+1} - x_{\hat{n}} = \Delta$.
	Consequently, the optimal 1D MA trajectory for problem (P1) takes the form of
	\begin{align}
		x_n= \left\{\begin{array}{l}
			x_{\rm min},\ 1\leq n\leq N_{\rm L}, \\
			x_{N_{\rm L}+1} + (n-N_{\rm L}-1)\Delta,\ N_{\rm L}+1\leq n\leq N-N_{\rm R}, \\
			x_{\rm max},\ N-N_{\rm R}+1\leq n\leq N. \end{array}\right. \notag
	\end{align}
	
	Subsequently, we prove by contradiction that $x_{\rm max} = A$ and $x_{\rm min} = 0$. Suppose $\bm{x}$ is the optimal solution for maximizing $\text{var}(\bm{x})$ with $x_{\rm max} < A$. Since $\mu(\bm{\bar{x}}_N) < x_{\rm max}$, then according to \eqref{var_xn}, setting $x_N \leftarrow \min(x_{N-1} + \Delta, A) = \min(x_{\rm max} + \Delta, A)$
	increases $\left(x_N - \mu(\bm{\bar{x}}_N)\right)^2$ and hence increases $\text{var}(x_N|\bm{\bar{x}}_N)$. This contradicts the assumption that $\bm{x}$ is optimal. Therefore, we must have $x_{\rm max} = A$. Similarly, by a symmetric argument, we can show that $x_{\rm min} = 0$ in a similar manner.
	
	We now prove that the optimal solution to problem (P1) satisfies $x_{N_{\rm L}+1} = \Delta$ or $x_{N-N_{\rm R}} = A - \Delta$. Let $N_{\rm M} \coloneqq N - N_{\rm L} - N_{\rm R}$, and define $\bm{x}\coloneqq[\bm{x}_{\rm L}^{\mathsf T}, \bm{x}_{\rm M}^{\mathsf T}, \bm{x}_{\rm R}^{\mathsf T}]^{\mathsf T}$,
	where $\bm{x}_{\rm L} \coloneqq [x_1,\ldots,x_{N_{\rm L}}]^{\mathsf T} \in \mathbb{R}^{N_{\rm L}\times1}$,
	$\bm{x}_{\rm M} \coloneqq [x_{N_{\rm L}+1},\ldots,x_{N-N_{\rm R}}]^{\mathsf T} \in \mathbb{R}^{N_{\rm M}\times1}$, and
	$\bm{x}_{\rm R} \coloneqq [x_{N-N_{\rm R}+1},\ldots,x_N]^{\mathsf T} \in \mathbb{R}^{N_{\rm R}\times1}$. For ease of notation, we further define $\bm{x}_{\rm LR}\coloneqq[\bm{x}_{\rm L}^{\mathsf T},\bm{x}_{\rm R}^{\mathsf T}]^{\mathsf T}\in\mathbb{R}^{(N_{\rm L}+N_{\rm R})\times1}$.
	Given $\bm{x}_{\rm LR}$ and $N_{\rm M}$, $\text{var}(\bm{x})$ can then be expressed as a function of $x_{N_{\rm L}+1}$ as $\text{var}(x_{N_{\rm L}+1}|\bm{x}_{\rm LR},N_{\rm M}) = \frac{(N_{\rm L}+N_{\rm R})N_{\rm M}}{N^2} \left( \mu(\bm{x}_{\rm M})-\mu(\bm{x}_{\rm LR}) \right)^2 + \frac{N_{\rm L}+N_{\rm R}}{N}\text{var}(\bm{x}_{\rm LR}) + \frac{N_{\rm M}}{N}\text{var}(\bm{x}_{\rm M})$.
	Since $\bm{x}_{\rm M}$ forms a ULA with antenna spacing $\Delta$, its variance $\text{var}(\bm{x}_{\rm M})$ is independent of its first element $x_{N_{\rm L}+1}$. Moreover, the mean of $\bm{x}_{\rm M}$ can be expressed as $\mu(\bm{x}_{\rm M}) = \frac{1}{2}\left(x_{N_{\rm L}+1} + x_{N-N_{\rm R}}\right) = x_{N_{\rm L}+1} + \frac{N_{\rm M}-1}{2}\Delta$,
	which is linear with $x_{N_{\rm L}+1}$. Furthermore, we have $x_{N_{\rm L}+1}\in(0,\Delta]$ and $A - x_{N-N_{\rm R}} = A-(x_{N_{\rm L}+1}+(N_{\rm M}-1)\Delta) \in(0,\Delta]$,
	which implies that $x_{N_{\rm L}+1}$ belongs to the feasible domain $\mathcal{D}(x_{N_{\rm L}+1})\coloneqq (0,\Delta] \cup [A-N_{\rm M}\Delta,A-(N_{\rm M}-1)\Delta) = \left\{\begin{array}{l}
		(0,A-(N_{\rm M}-1)\Delta),\ A-N_{\rm M}\Delta\leq0, \\
		{[A-N_{\rm M}\Delta,\Delta]},\ A-N_{\rm M}\Delta>0. \end{array}\right.$
	Since $\text{var}(\bm{x}_{\rm M}|\bm{x}_{\rm LR})$ is a convex quadratic function of $x_{N_{\rm L}+1}\in \mathcal{D}(x_{N_{\rm L}+1})$, its maximum must be attained at the boundary of the feasible region $\mathcal{D}(x_{N_{\rm L}+1})$. If $A-N_{\rm M}\Delta \leq 0$ such that $\mathcal{D}(x_{N_{\rm L}+1})=(0, A-(N_{\rm M}-1)\Delta)$, there always exists $\bar{\tau}>0$ with $x_{N_{\rm L}+1}\pm\bar{\tau} \in \mathcal{D}(x_{N_{\rm L}+1})$. In this case, we have $\text{var}(x_{N_{\rm L}+1}|\bm{x}_{\rm LR},N_{\rm M}) \leq  \max(\text{var}(x_{N_{\rm L}+1}+\bar{\tau}|\bm{x}_{\rm LR},N_{\rm M}), \text{var}(x_{N_{\rm L}+1}-\bar{\tau}|\bm{x}_{\rm LR},N_{\rm M}))$,
	which contradicts the assumption that $\bm{x}$ is the optimal solution for maximizing $\text{var}(\bm{x})$. Therefore, we must have $A-N_{\rm M}\Delta>0$,
	and the optimal solution satisfies $x_{N_{\rm L}+1} = A-N_{\rm M}\Delta$ or $x_{N_{\rm L}+1} = \Delta$,
	i.e., $x_{N-N_{\rm R}} = x_{N_{\rm L}+1}+(N_{\rm M}-1)\Delta = A-\Delta$ or $x_{N_{\rm L}+1} = \Delta$.
	Since the cases $x_{N-N_{\rm R}}=A-\Delta$ and $x_{N_{\rm L}+1}=\Delta$ are symmetric w.r.t. $x=\frac{A}{2}$, we only consider $x_{N_{\rm L}+1}=\Delta$ in the following for simplicity. Then, we can determine $N_{\rm M}$ as
	\begin{align}\label{NM}
		N_{\rm M} &= \frac{1}{\Delta}\left(x_{N-N_{\rm R}} - x_{N_{\rm L}+1}\right) + 1 = \left\lceil\frac{A}{\Delta}\right\rceil-1.
	\end{align}
	
	Finally, we determine the values of $N_{\rm L}$ and $N_{\rm R}$. Given $\bm{x}_{\rm M}$, the variance $\text{var}(\bm{x})$ can be expressed as a function of $N_{\rm L}$ as $\text{var}(N_{\rm L}|\bm{x}_{\rm M}) = \frac{N_{\rm L}}{N}\text{var}(\bm{x}_{\rm L}) + \frac{N_{\rm M}}{N}\text{var}(\bm{x}_{\rm M}) + \frac{N_{\rm R}}{N}\text{var}(\bm{x}_{\rm R})  + \frac{N_{\rm L}}{N}(\mu(\bm{x}_{\rm L})-\mu(\bm{x}))^2 + \frac{N_{\rm M}}{N}(\mu(\bm{x}_{\rm M})-\mu(\bm{x}))^2  + \frac{N_{\rm R}}{N}(\mu(\bm{x}_{\rm R})-\mu(\bm{x}))^2 = -\frac{A^2}{N^2}N_{\rm L}^2 + \frac{A}{N^2}(NA + 2N_{\rm M}\mu(\bm{x}_{\rm M}) - 2N_{\rm M}A)N_{\rm L}  +  \frac{1}{N}(N_{\rm M}\text{var}(\bm{x}_{\rm M}) + N_{\rm M}\mu(\bm{x}_{\rm M})^2 - N_{\rm M}A^2)  -\frac{1}{N^2}(N_{\rm M}\mu(\bm{x}_{\rm M}) + NA - N_{\rm M}A)^2 +A^2$.
	It can be shown that $\text{var}(N_{\rm L}|\bm{x}_{\rm M})$ is a concave quadratic function of $N_{\rm L}$. Hence, the maximum value of $\text{var}(N_{\rm L}|\bm{x}_{\rm M})$ is attained when $N_{\rm L} = \left\langle \frac{N}{2} + \frac{N_{\rm M}\mu(\bm{x}_{\rm M})}{A} - N_{\rm M} \right\rangle$.
	Since $x_{N_{\rm L}+1} = \Delta$ and $x_{N-N_{\rm R}} =x_{N_{\rm L}+1} +(N_{\rm M}-1)\Delta$, we have $\mu(\bm{x}_{\rm M}) = \frac{1}{2}(x_{N-N_{\rm R}} + x_{N_{\rm L}+1}) 
	= \frac{N_{\rm M}+1}{2}\Delta$.
	Then, we obtain $N_{\rm L} = \left\langle \frac{N}{2} + \frac{N_{\rm M}\left\lceil\frac{A}{\Delta}\right\rceil\Delta}{2A} - N_{\rm M} \right\rangle$.
	On one hand, since $\left\lceil\frac{A}{\Delta}\right\rceil \geq \frac{A}{\Delta}$, it follows that $\frac{N}{2} + \frac{N_{\rm M}\left\lceil\frac{A}{\Delta}\right\rceil\Delta}{2A} - N_{\rm M} 
	\geq \frac{N-N_{\rm M}}{2}$.
	On the other hand, since $\left\lceil\frac{A}{\Delta}\right\rceil < \frac{A}{\Delta}+1$, we have $\frac{N}{2} + \frac{N_{\rm M}\left\lceil\frac{A}{\Delta}\right\rceil\Delta}{2A} - N_{\rm M} 
	< \frac{N-N_{\rm M}}{2} + \frac{1}{2}$.
	Then, we obtain $N_{\rm L} = \left\lceil \frac{N-N_{\rm M}}{2} \right\rceil$.
	Accordingly, $N_{\rm R}$ is given by $N_{\rm R} = N-N_{\rm L}-N_{\rm M} 
	= \left\lfloor \frac{N-N_{\rm M}}{2} \right\rfloor$.
	
	Therefore, for the case of $A < (N-1)\Delta$, the optimal AVV $\bm{v}^{\rm SC}$ and APV $\bm{x}^{\rm SC}$ for problem (P1) are given by \eqref{vRL_main} and \eqref{xRL3_main}, respectively. Moreover, $\text{var}\left(\bm{x}^{\rm SC}\right)$ is given by
	\begin{align}\label{crbfin1}
		&\text{var}\left(\bm{x}^{\rm SC}\right) 
		= \frac{\Delta^2}{6N}N_{\rm M}(N_{\rm M}+1)(2N_{\rm M}+1) 
		+ \frac{A^2}{N}\left\lfloor \frac{N-N_{\rm M}}{2} \right\rfloor \notag\\
		&\quad - \frac{1}{N^2}\left( \frac{\Delta}{2}N_{\rm M}(N_{\rm M}+1) 
		+ \left\lfloor \frac{N-N_{\rm M}}{2} \right\rfloor A \right)^2.
	\end{align}
	Thus, the corresponding minimum CRB is
	\begin{align}\label{crbfin2}
		{\rm{CRB}}_u(\bm{x}^{\rm SC}) 
		= \frac{\sigma^2\lambda^2}{8\pi^2P|\beta|^2} 
		\frac{1}{N\,\text{var}\left(\bm{x}^{\rm SC}\right)}.
	\end{align}
	It is observed that ${\rm{CRB}}_u(\bm{x}^{\rm SC})$ decreases with $N$ in the order of $\mathcal{O}(N^{-1})$, while it decreases with $A$ in the order of $\mathcal{O}(A^{-2})$. This completes the proof of Theorem~2.

	\bibliographystyle{IEEEtran}
	\bibliography{IEEEabrv,IEEEexample}
	
\end{document}